\title{Better Bounds on the Adaptivity Gap of Influence Maximization under Full-adoption Feedback} 
\titlerunning{Adaptivity Gap in Influence Maximization Problems}
\author{Gianlorenzo D'Angelo}{Gran Sasso Science Institute, Italy}{gianlorenzo.dangelo@gssi.it}{}{}
\author{Debashmita Poddar}{Gran Sasso Science Institute, Italy}{debashmita.poddar@gssi.it}{}{}
\author{Cosimo Vinci}{Gran Sasso Science Institute, Italy}{cosimo.vinci@gssi.it}{}{}
\authorrunning{G. D'Angelo et al.} 
\keywords{Influence Maximization, Adaptivity Gap, Stochastic Approximation, Graph Algorithms, Viral Marketing} 
\newcommand{\E}{\mathbb{E}}
\begin{document}

\maketitle
\hideLIPIcs 
\begin{abstract}
In the \emph{influence maximization (IM)} problem, we are given a social network and a budget $k$, and we look for a set of $k$ nodes in the network, called seeds, that maximize the expected number of nodes that are reached by an influence cascade generated by the seeds, according to some stochastic model for influence diffusion.
Extensive studies have been done on the IM problem, since his definition by Kempe et al.~\cite{Kempe2003}. However, most of the work focuses on the \emph{non-adaptive} version of the problem where all the $k$ seed nodes must be selected before that the cascade starts. In this paper, we study the \emph{adaptive} IM, where the nodes are selected sequentially one by one, and the decision on the $i$th seed can be based on the \emph{observed} cascade produced by the first $i-1$ seeds. We focus on the \emph{full-adoption feedback} in which we can observe the entire cascade of each previously selected seed and on the independent cascade model where each edge is associated with an independent probability of diffusing influence.

Previous works showed that there are constant upper bounds on the adaptivity gap, which compares the performance of an adaptive algorithm against a non-adaptive one, but the analyses used to prove these bounds only works for specific graph classes such as in-arborescences, out-arborescences, and one-directional bipartite graphs. Our main result is the first sub-linear upper bound that holds for any graph. Specifically, we show that the adaptivity gap is upper-bounded by $\lceil n^{1/3}\rceil $, where $n$ is the number of nodes in the graph. Moreover, we improve over the known upper bound for in-arborescences from $\frac{2e}{e-1}\approx 3.16$ to $\frac{2e^2}{e^2-1}\approx 2.31$. Finally, we study $\alpha$-bounded graphs, a class of undirected graphs in which the sum of node degrees higher than two is at most $\alpha$, and show that the adaptivity gap is upper-bounded by $\sqrt{\alpha}+O(1)$. Moreover, we show that in 0-bounded graphs, i.e. undirected graphs in which each connected component is a path or a cycle, the adaptivity gap is at most $\frac{3e^3}{e^3-1}\approx 3.16$.
To prove our bounds, we introduce new techniques to relate adaptive policies with non-adaptive ones that might be of their own interest.
\end{abstract}

\section{Introduction}

In the Influence Maximization (IM) problem, we are given a social network, a stochastic model for diffusion of influence over the network, and a budget $k$, and we ask to find a set of $k$ nodes, called \emph{seeds}, that maximize their \emph{spread of influence}, which is the expected number of nodes reached by a cascade of influence diffusion generated by the seeds according to the given diffusion model.
One of the most studied model for influence diffusion is the Independent Cascade (IC), where each edge is associated with an independent probability of transmitting influence from the source node to the tail node. In the IC model the spread of influence is a monotone submodular function of the seed set, therefore a greedy algorithm guarantees a $1-\frac{1}{e}$ approximation factor for the IM problem~\cite{Kempe2015a}.
Since his definition by Domingos and Richardson~\cite{Domingos2001,Richardson2002} and formalization as an optimization problem by Kempe et al.~\cite{Kempe2003,Kempe2015a}, the IM problem and its variants have been extensively investigated, motivated by applications in viral marketing~\cite{Chen10}, adoption of technological innovations~\cite{Goldberg2013}, and outbreak or failure detection~\cite{Leskovec2007}. See~\cite{DBLP:series/synthesis/2013Chen,Li2018} for surveys on the IM problem.

Recently, Golovin and Krause~\cite{Golovin2011a} initiated the study of the IM problem under the framework of adaptive optimization, where, instead of selecting all the seeds at once at the beginning of the process, we can select one seed at a time and observe, to some extent, the portion of the network reached by a new selected seed. The advantage is that the decision on the next seed to choose can be based on the observed spread of previously selected seeds, usually called \emph{feedback}.
Two main feedback models have been introduced: in the \emph{full-adoption} feedback the whole spread from each seed can be observed, while in the \emph{myopic} feedback one can only observe the direct neighbors of each seed.

Golovin and Krause considered the Independent Cascade model and showed that, under full-adoption feedback, the objective function satisfies the property of \emph{adaptive submodularity} (introduced in the same paper) and therefore a greedy algorithm achieves a $1-\frac{1}{e}$ approximation for the adaptive IM problem. They also conjectured that there exists a constant factor approximation algorithm for the myopic feedback model, which has been indeed found by Peng and Chen~\cite{Peng2019} who proposed a $\frac{1}{4}\left(1-\frac{1}{e}\right)$-approximation algorithm.

However, the approximation ratio for the adaptive IM problem, which compares a given adaptive algorithm with an optimal adaptive one, does not measure the benefits of implementing adaptive policies over non-adaptive ones. 
To this aim, Chen and Peng~\cite{Chen2019,Peng2019} introduced the adaptivity gap, which is the supremum, over all possible inputs, of the ratio between the spread of an optimal adaptive policy and that of an optimal non-adaptive one. 
In~\cite{Peng2019}, Peng and Chen considered independent cascade model with myopic feedback and showed that the adaptivity gap is between $\frac{e}{e-1}$ and 4 for any graph.
In~\cite{Chen2019}, the same authors showed some upper and lower bounds on the adaptivity gap in the case of full-adoption feedback, still under independent cascade, for some particular graph classes. Specifically, they showed that the adaptivity gap is in the interval $\left[ \frac{e}{e-1}, \frac{2e}{e-1}\right]$ for in-arborescences and it is in the interval $\left[ \frac{e}{e-1},2\right]$ for out-arborescences. Moreover, it is equal to $\frac{e}{e-1}$ in one-directional bibartite graphs. In order to show these bounds, they followed an approach introduced by Asadpour and Nazerzadeh~\cite{Asadpour16} which consists in transforming an adaptive policy into a non-adaptive one by means of multilinear extensions, and constructing a Poisson process to relate the influence spread of the non-adaptive policy to that of the adaptive one. For general graphs and full-adoption feedback, the only known upper bounds on the adaptivity gap are linear in the size of the graph and can be trivially derived.
%

In this paper, we consider the independent cascade model with full-adoption feedback, and show the first sub-linear upper bound on the adaptivity gap that holds for general graphs. In detail we show that that the adaptivity gap is at most $\lceil n^{1/3}\rceil$, where $n$ is the number of nodes in the graph. Moreover, we tighten the upper bound on the adaptivity gap for in-arborescences by showing that it is at most $\frac{2e^2}{e^2-1}<\frac{2e}{e-1}$.
Using similar techniques we study the adaptivity gap of \emph{$\alpha$-bounded graphs}, which is the class of undirected graphs where the sum of node degrees higher than two is at most $\alpha$.
We show that the adaptivity gap is upper-bounded by $\sqrt{\alpha}+O(1)$, which is smaller that $O(n^{1/3})$ for several graph classes. 
In 0-bounded graphs, i.e. undirected graphs in which each connected component is a path or a cycle, the adaptivity gap is at most $\frac{3e^3}{e^3-1}$.
To prove our bounds, we introduce new techniques to connect adaptive policies with non-adaptive ones that might be of their own interest.

\subsection*{Related Work}
\subparagraph*{Influence Maximization.} Several studies based on general graphs \cite{Lowalekar2016, 7403743,Schoenebeck2019, Tang2014} have been conducted since the seminal paper by Kempe et al.~\cite{Kempe2015a}. Schoenebeck and Tao~\cite{Schoenebeck2019} studied the influence maximization problem on undirected graphs and proves that it is APX-hard for both the independent cascade and the linear threshold problem. 
Borgs et al.~\cite{Borgs2014} propose an efficient algorithm that runs in quasilinear time and still guarantees an approximation factor of $1-\frac{1}{e}-\epsilon$, for any $\epsilon>0$.
Tang et al.~\cite{Tang2014} propose an algorithm which is experimentally close to the optimal one under the independent cascade model. Mihara et al.~\cite{7403743} consider unknown graphs for the influence maximization problem and devises an algorithm which achieves a fraction between 0.6 and 0.9 of the influence spread with minimal knowledge of the graph topology. Extensive literature reviews on influence maximization and its machinery is provided by Chen et al.~\cite{DBLP:series/synthesis/2013Chen} and Li et al.~\cite{Li2018}.

Several works on the adaptive influence maximization problem~\cite{Han2018a,Sun2018, Tang2019, Tong2019, Tong2017,  DBLP:journals/corr/VaswaniL16, Yuan2017} evolved after the concept introduced by Golovin and Krause~\cite{Golovin2011a}, and explore the adaptive optimization under different feedback models. The myopic model (in which, one can only observe the nodes influenced by the seed nodes) has been studied  in~\cite{ Peng2019,Salha2018}. Sun et al.~\cite{Sun2018} capture the scenario in which, instead of considering one round, the diffusion process takes over $T$ rounds, and a seed set of at most $k$ nodes is selected at each round. The authors designed a greedy approximation algorithm that guarantees a constant approximation ratio. Tong and Wang~\cite{Tong2015} introduce a new version of the adaptive influence maximization problem by adding a time constraint. Other than the classic full-adoption and myopic feedback model, Yuan and Tang~\cite{Yuan2017}, and Tong and Wang~\cite{Tong2019}, have also introduced different feedback models that use different parameters to overcome the need of submodularity to guarantee a good approximation. Han et al.~\cite{Han2018a} propose a framework which uses existing non-adaptive techniques to construct a strong approximation for a generalization of the adaptive influence maximization problem in which in each step a batch of node is selected.

\subparagraph*{Adaptivity Gaps.} Adaptivity gaps for the problem of maximizing stochastic monotone submodular functions have been studied by Asadpour and
Nazerzadeh~\cite{Asadpour16}.
A series of work studied adaptivity gaps for a two-step adaptive influence maximization problem~\cite{Badanidiyuru2016,Rubinstein2015, Seeman2013, Singer2016}. Gupta et al.~\cite{Jiang2020, Gupta2017} worked on the adaptivity gaps for stochastic probing. A recent line of studies has been conducted~\cite{Chen2019, Chen2019a, Peng2019} which focuses on finding the adaptivity gaps on different graph classes using the classical feedback models. Peng and Chen~\cite{Peng2019} confirmed a conjecture of Golovin and Krause~\cite{Golovin2011a}, which states that the adaptive greedy algorithm with myopic feedback is a constant approximation of the adaptive optimal solution. They show that the adaptivity gap of the independent cascade model with myopic feedback belongs to $[\frac{e}{e-1}, 4]$. Chen et al.~\cite{Chen2019a} introduced the greedy adaptivity gap, which compares the performance of the adaptive and the non-adaptive greedy algorithms. They show that the infimum of the greedy adaptivity gap is $1-\frac{1}{e}$ for every combination of diffusion and feedback models.
The most related work to our results is that of~\cite{Chen2019}. Chen and Peng~\cite{Chen2019} derive upper and lower bounds on the adaptivity gap under the independent cascade model with full-adoption feedback, when the considered graphs are in-arborescences, out-arborescences, or one-directional bipartite graphs. In particular, they show that the adaptivity gaps of in-arborescences and out-arborescences are in the intervals $\left[ \frac{e}{e-1}, \frac{2e}{e-1}\right]$ and $\left[ \frac{e}{e-1}, 2\right]$, respectively, and they provide a tight bound of $\frac{e}{e-1}$ on the adaptivity gap of one-directional bipartite graphs.


\subsection*{Organization of the Paper}
In Section \ref{sec_prel} we give the preliminary definitions and notations which this work is based on. Sections~\ref{sec_inarb}--\ref{sec_other} are devoted to the main technical contribution of the paper (i.e., adaptivity gaps of in-arborescences, general graphs, and $\alpha$-bounded graphs). In Section \ref{sec_future}, we highlight some future research directions. Due to the lack of space, some missing proofs are deferred to the appendix. 

\section{Preliminaries}\label{sec_prel}
For two integers $h$ and $k$, $h\leq k$, let $[k]_h:=\{h,h+1,\ldots, k\}$ and $[k]:=[k]_1$. 

\subparagraph*{Independent Cascade Model.} In the {\em independent cascade model} (IC), we have an {\em influence graph}  $G=(V,E,(p_{uv})_{(u,v)\in E})$, where $p_{uv}\in [0,1]$ is an {\em activation probability} associated to each edge $(u,v)\in E$. Given a set of {\em seed nodes} $S\subseteq V$ which are initially \emph{active}, the diffusion process in the IC model is defined in $t\geq 0$  discrete steps as follows: (i) let $A_t$ be the set of active nodes which are activated at each step $t\geq 0$; (ii) $A_0:=S$; (iii) given a step $t\geq 0$, for any edge $(u,v)$ such that $u\in A_t$, node $u$ can activate node $v$ with probability $p_{uv}$ independently from any other node, and, in case of success, $v$ is included in $A_{t+1}$; (iv) the diffusion process ends at a step $r\geq 0$ such that $A_{r}=\emptyset$, i.e., no node can be activated at all, and $\bigcup_{t\leq r} {A_t}$ is the {\em influence spread}, i.e., the set of nodes activated/reached by the diffusion process. 

The above diffusion process can be equivalently defined as follows. The {\em live-edge graph}  $L=(V,L(E))$ of $G$ is a random graph made from $G$, where $L(E)\subseteq E$  is a subset of edges such that each edge $(u,v)\in E$ is included in $L(E)$ with probability $p_{uv}$, independently from the other edges. Given a live-edge graph $L$, let $R(S,L):=\{v\in V:\text{ there exists a path from $u$ to $v$ in $L$ for some $u\in S$}\}$, i.e., the set of nodes reached by nodes in $S$ in the live-edge graph $L$. Informally, if $S$ is the set of seed nodes, and $L$ is a live-edge graph, $R(S,L)$ equivalently denotes the set of nodes which are reached/activated by the above diffusion process. 
Given a set of seed nodes $S$, the {\em expected influence spread} of $S$ is defined as $\sigma(S):=\mathbb{E}_L[|R(S,L)|]$. 

\subparagraph*{Non-adaptive Influence Maximization.} The {\em non-adaptive influence maximization problem under the IC model} is the computational problem that, given an influence graph $G$ and an integer $k\geq 1$, asks to find a set of seed nodes $S\subseteq V$ with $|S|=k$ such that $\sigma(S)$ is maximized. 

\subparagraph*{Adaptive Influence Maximization.} Differently from the non-adaptive setting, in which all the seed nodes are activated at the beginning and then the influence spread is observed, an {\em adaptive policy} activates the seeds sequentially in $k$ steps,
one seed node at each step, and the decision on the next seed node to select is based on the feedback resulting from the observed spread of previously selected nodes. The feedback model considered in this work is {\em full-adoption}: when a node is selected, the adaptive policy observes its entire influence spread. 

An adaptive policy under the full-adoption feedback model is formally defined as follows. Given a live-edge graph $L$, the {\em realisation} $\phi_L:V\rightarrow 2^V$ associated to $L$ assigns to each node $v\in V$ the value $\phi_L(v):=R(\{v\},L)$, i.e., the set of nodes activated by $v$ under a live-edge graph $L$. Given a set $S\subseteq V$, a {\em partial realisation} $\psi:S\rightarrow 2^V$ is the restriction to $S$ of some realisation, i.e., there exists a live-edge graph $L$ such that $\psi(v)=\phi_L(v)$ for any $v\in S$. Given a partial realisation $\psi:S\rightarrow 2^V$, let $dom(\psi):=S$, i.e., $dom(\psi)$ is the domain of  partial realisation $\psi$, let $R(\psi):=\bigcup_{v\in S}\psi(v)$, i.e., $R(\psi)$ is the set of nodes reached/activated by the diffusion process when the set of seed nodes is $S$, and let $f(\psi):=|R(\psi)|$. A partial realisation $\psi'$ is a {\em sub-realisation} of $\psi$ (or, equivalently,  $\psi'\subseteq \psi$), if $dom(\psi')\subseteq dom(\psi)$ and $\psi'(v)=\psi(v)$ for any $v\in dom(\psi')$. We observe that a partial realisation $\psi$ can be equivalently represented as $\{(v,R(\{v\},L)):v\in dom(\psi)\}$ for some live-edge graph $L$. 

An adaptive policy $\pi$ takes as input a partial realisation $\psi$ and, either returns a node $\pi(\psi)\in V$ and activates it as seed, or interrupts the activation of new seed nodes, e.g., by returning a string $\pi(\psi):=STOP$. An adaptive policy $\pi$ can be run as in Algorithm \ref{ad_alg}. 
\begin{algorithm}[ht]
	\caption{Adaptive algorithm}
	\label{ad_alg}
	\begin{algorithmic}[1]
		\REQUIRE an influence graph $G$ and an adaptive policy $\pi$;
		\ENSURE a partial realisation;
		\STATE let $L$ be the live-edge graph;
		\STATE let $\psi:=\emptyset$ (i.e., $\psi$ is the empty partial realisation);
		\WHILE{$\pi(\psi)\neq STOP$}
		\STATE $v:=\pi(\psi)$;
		\STATE $\psi:=\psi\cup \{(v,R(\{v\},L))\}$;
		\ENDWHILE
		\RETURN $\psi_{\pi,L}:=\psi$;
	\end{algorithmic}
\end{algorithm}

The {\em expected influence spread} of an adaptive policy $\pi$ is defined as $\sigma(\pi):=\mathbb{E}_L[f(\psi_{\pi,L})]$, i.e., it is the expected value (taken on all the possible live-edge graphs) of the number of nodes reached by the diffusion process at the end of Algorithm \ref{ad_alg}. We say that $|\pi|=k$ if policy $\pi$ always return a partial realisation $\psi_{\pi,L}$ with $|dom(\psi_{\pi,L})|=k$. The {\em adaptive influence maximization problem under the IC model} is the computational problem that, given an influence graph $G$ and an integer $k\geq 1$, asks to find an adaptive policy $\pi$ that maximizes the expected influence spread $\sigma(\pi)$ subject to constraint $|\pi|=k$. 

\subparagraph*{Adaptivity gap.}
Given an influence graph $G$ and an integer $k\geq 1$, let $OPT_N(G,k)$ (resp. $OPT_A(G,k)$) denote the optimal value of the non-adaptive (resp. adaptive) influence maximization problem with input $G$ and $k$. Given a class of influence graphs $\mathcal{G}$ and an integer $k\geq 1$, the {\em $k$-adaptivity gap} of $\mathcal G$ is defined as $$AG(\mathcal{G},k):=\sup_{G\in\mathcal{G}}\frac{OPT_A(G,k)}{OPT_N(G,k)},$$ and measures how much an adaptive policy outperforms a non-adaptive solution for the influence maximization problem applied to influence graphs in $\mathcal{G}$, when the maximum number of seed nodes is $k$. The {\em adaptivity gap} of $\mathcal{G}$ is defined as $AG(\mathcal{G}):=\sup_{k\geq 1}AG(\mathcal{G},k)$. We observe that for $k=1$ or $n\leq k$ the $k$-adaptivity gap is trivially equal to 1, thus we omit such cases in the following. 
\section{Adaptivity Gap for In-arborescences}\label{sec_inarb}
An {\em in-arborescence} is a graph $G=(V,E)$ that can be constructed from a rooted tree $T=(V,F)$, by adding in $E$ an edge $(v,u)$ if $u$ is a father of $v$ in tree $T$. An upper bound of $\frac{2e}{e-1}\approx 3.16$ on the adaptivity gap of in-arborescences has been provided in \cite{Chen2019}. In this section we provide an improved upper bound for such graphs. 
\begin{theorem}\label{thm1}
If $\mathcal{G}$ is the class of all the in-arborescences, then $$AG(\mathcal{G},k)\leq \frac{2}{1-(1-2/k)^k}\leq \frac{2e^2}{e^{2}-1}\approx 2.31,\ \forall k\geq 2.$$
\end{theorem}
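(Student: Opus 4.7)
The plan is to upper-bound $OPT_A(G,k)$ by constructing a randomized non-adaptive policy from an optimal adaptive policy $\pi^*$ and analysing its spread via the in-arborescence structure. Let $q_v:=\Pr_L[v\in dom(\psi_{\pi^*,L})]$ be the marginal probability that $\pi^*$ selects $v$; by the budget constraint $\sum_v q_v=k$, so $\mu_v:=q_v/k$ is a probability distribution on $V$. A natural non-adaptive policy draws $k$ seeds i.i.d.\ from $\mu$, and the Chen--Peng analysis for in-arborescences based on such a sample gives the bound $\frac{2e}{e-1}$. To improve this constant, I would instead draw \emph{two} independent copies $S_1,S_2$, each of $k$ seeds i.i.d.\ from $\mu$, and form the random seed set $S:=S_1\cup S_2$ of size at most $2k$.

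Since $\sigma$ is monotone submodular, $\sigma(S_1\cup S_2)\leq\sigma(S_1)+\sigma(S_2)$, so $\max\{\sigma(S_1),\sigma(S_2)\}\geq\tfrac{1}{2}\sigma(S_1\cup S_2)$, which yields $OPT_N(G,k)\geq\tfrac{1}{2}\mathbb{E}[\sigma(S_1\cup S_2)]$. The heart of the argument is then to show $\mathbb{E}[\sigma(S_1\cup S_2)]\geq(1-(1-2/k)^k)\cdot OPT_A(G,k)$. Exploiting the in-arborescence structure, for any live-edge graph $L$ and node $u\in V$ the set $T_u(L)\subseteq V$ of descendants of $u$ reaching $u$ via a live path in $L$ is a connected subtree, and any seed set $S$ activates $u$ under $L$ if and only if $S\cap T_u(L)\neq\emptyset$. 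Hence the conditional probability that $S_1\cup S_2$ activates $u$ equals $1-(1-\mu(T_u(L)))^{2k}$, while the adaptive policy activates $u$ with conditional probability bounded by a union bound over $v\in T_u(L)$. The elementary inequality $1-(1-\alpha)^{2k}\geq(1-(1-2/k)^k)\cdot\min\{1,k\alpha\}$ for $\alpha\in[0,1]$ then gives the required pointwise comparison.

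Summing over $u\in V$ and taking expectations over $L$, combined with the submodularity factor of $\tfrac{1}{2}$, yields the theorem via $OPT_A(G,k)/OPT_N(G,k)\leq 2/(1-(1-2/k)^k)\leq 2e^2/(e^2-1)$. The main obstacle is the union-bound step: the witness set $T_u(L)$ depends on $L$, whereas the marginals $q_v$ do not, so one must carefully re-express the adaptive activation probability as an expectation over $L$ of $L$-conditional marginals and exchange this expectation with the subtree-restricted sum, exploiting the in-arborescence property that the seeds chosen by $\pi^*$ along any root-to-leaf path are constrained by the global budget. Once this accounting is handled correctly, the argument goes through for every $k\geq 2$ and yields both the $k$-dependent bound $\tfrac{2}{1-(1-2/k)^k}$ and its asymptotic simplification to $\tfrac{2e^2}{e^2-1}$.
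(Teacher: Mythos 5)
Your overall plan (sample non-adaptive seeds i.i.d.\ from the marginals of the optimal adaptive policy and compare activation probabilities node by node) is the natural first attack, but the step you yourself flag as ``the main obstacle'' is not a bookkeeping issue that goes through ``once handled correctly'': it is exactly the point where this approach breaks, and it is the reason the paper does something entirely different. Writing $q_v=\Pr_L[v\in dom(\psi_{\pi^*,L})]$, the union bound gives $\Pr_L[dom(\psi_{\pi^*,L})\cap T_u(L)\neq\emptyset]\le\sum_{v}\Pr_L[v\in T_u(L)\wedge v\in dom(\psi_{\pi^*,L})]$, and because the adaptive policy correlates its choices with $L$, the joint probability $\Pr_L[v\in T_u(L)\wedge v\in dom(\psi_{\pi^*,L})]$ can be as large as $\min\{\Pr_L[v\in T_u(L)],q_v\}$ rather than anything close to the product; in particular it is not dominated by $\Pr_L[v\in T_u(L)]\cdot q_v$, which is what you would need in order to compare the adaptive side against $\E_L\bigl[1-(1-\mu(T_u(L)))^{2k}\bigr]\ge(1-e^{-2})\cdot\E_L[\min\{1,k\mu(T_u(L))\}]$ after the concavity step. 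The directed-path instance giving Chen and Peng their $\frac{e}{e-1}$ lower bound is precisely a case where the adaptive policy places each seed only in those realisations where that seed lies in the live witness set of the next unreached node, so the $L$-conditional selection probabilities are far from the unconditional marginals $q_v$ and the pointwise comparison fails. This is why Chen and Peng needed the multilinear-extension/Poisson-process machinery even to get $\frac{2e}{e-1}$; your two-sample refinement inherits the same unproved exchange step.

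The paper sidesteps the correlation problem with a different construction: a hybrid policy that takes the optimal non-adaptive $(t-1)$-seed set $S_{t-1}$ and adds a single extra seed drawn from the distribution $x_i/k$ given by the adaptive marginals. Adaptive submodularity (Claim~\ref{lem0}) lower-bounds the expected marginal gain of this one random seed by $\frac{1}{k}\left(OPT_A(G,k)-\sigma(R(\psi_{t-1,L}))\right)$ (Lemma~\ref{lem1}), and the in-arborescence structure bounds $\sigma(R(\psi_{t-1,L}))\le f(\psi_{t-1,L})+OPT_N(G,t-1)$ because the boundary of the reached set has at most $t-1$ nodes (Lemma~\ref{lem2}). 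Together these yield the recursion $OPT_N(G,t)\ge\frac{1}{k}\,OPT_A(G,k)+\left(1-\frac{2}{k}\right)OPT_N(G,t-1)$, which telescopes to the stated bound; both the factor $2$ and the $e^2$ come from this recursion, not from a two-sample or subadditivity trick. To salvage your route you would need a correlation inequality relating $L$-conditional selection probabilities to the unconditional marginals, and the known lower-bound instances indicate that no such inequality holds in the form your argument requires.
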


Let $G=(V=[n],E,(p_{uv})_{(u,v)\in E})$ be an in-arborescence, where $n>k$ is the number of nodes. To show the claim of Theorem \ref{thm1}, we need some preliminary notations and lemmas. Given a partial realisation $\psi$, and a node $i\in [n]$, let 
$$\Delta(i|\psi):=\E_L[f(\psi\cup \{(i,R(\{i\},L))\})-f(\psi)|\psi\subseteq \phi_L],$$
i.e., $\Delta(i|\psi)$ is the expected increment of the influence spread due to node $i$ when the observed partial realisation is $\psi$. We have the following claim (from \cite{Golovin2011a}), holding even for general graphs, whose proof is trivial. 
\begin{claim}[Adaptive Submodularity, \cite{Golovin2011a}]\label{lem0}
Let $G$ be an arbitrary influence graph. For any partial realisations $\psi,\psi'$ of $G$ such that $\psi\subseteq \psi'$, and any node $i\notin R(\psi')$, we have that  $\Delta(i|\psi')\leq \Delta(i|\psi)$. 
\end{claim}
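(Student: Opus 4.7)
The plan is to unpack the conditional expectation $\Delta(i|\psi)$ into a simple reachability computation on the induced subgraph $G[V\setminus R(\psi)]$, and then exploit the inclusion $R(\psi)\subseteq R(\psi')$ via a coupling. The key structural observation is that, although $\psi$ reveals the states of many edges, the only edges whose states are relevant for the increment $|R(\{i\},L)\setminus R(\psi)|$ are those entirely inside $V\setminus R(\psi)$, and these are unaffected by the conditioning.

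First I would formalize this structural observation. Under the IC model, represent $L$ by independent Bernoulli trials $(X_e)_{e\in E}$, one per edge. I claim that for any live-edge graph $L$ with $\psi\subseteq\phi_L$ and any node $u\notin R(\psi)$ reached from $i\notin R(\psi)$, there exists a live $i\to u$ path that stays entirely in $V\setminus R(\psi)$. Indeed, if some node $v$ on a live $i\to u$ path belongs to $R(\psi)$, then $v\in\psi(w)=R(\{w\},L)$ for some $w\in\mathrm{dom}(\psi)$, and propagating along the remaining live edges of the path shows that every subsequent node, including $u$, also lies in $R(\{w\},L)\subseteq R(\psi)$, a contradiction. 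Hence, letting $G_\psi$ denote the induced subgraph on $V\setminus R(\psi)$ with the same edge probabilities, we obtain the identity
\[
R(\{i\},L)\setminus R(\psi) \;=\; R_{G_\psi}(\{i\},L\cap G_\psi),
\]
where the right-hand side depends only on the coins $X_e$ for edges $e$ with both endpoints in $V\setminus R(\psi)$.

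Next I would observe that these ``internal'' coins are independent of the event $\{\psi\subseteq\phi_L\}$. The latter event is determined solely by the coins on edges outgoing from $R(\psi)$ (recall that reachability from a seed $w$ explores precisely such edges), which is a disjoint set of edges from the internal ones. By the mutual independence of the $X_e$, conditioning on $\psi\subseteq\phi_L$ does not change the distribution of the internal coins. Therefore
\[
\Delta(i|\psi) \;=\; \mathbb{E}\bigl[\,|R_{G_\psi}(\{i\},L\cap G_\psi)|\,\bigr],
\]
with the analogous identity for $\Delta(i|\psi')$ using $G_{\psi'}$.

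Finally, since $\psi\subseteq\psi'$ forces $\mathrm{dom}(\psi)\subseteq\mathrm{dom}(\psi')$ and agreement on the common domain, we have $R(\psi)\subseteq R(\psi')$, hence $V\setminus R(\psi')\subseteq V\setminus R(\psi)$ and $G_{\psi'}$ is a node-induced subgraph of $G_\psi$. Couple the two processes by using the same collection of independent edge coins $(X_e)_{e\in E}$: every node reachable from $i$ in $G_{\psi'}$ via live edges is also reachable from $i$ in $G_\psi$ via the same live edges (the path is still present). This gives the almost-sure inclusion $R_{G_{\psi'}}(\{i\},\cdot)\subseteq R_{G_\psi}(\{i\},\cdot)$, and taking expectations yields $\Delta(i|\psi')\leq \Delta(i|\psi)$.

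The main obstacle is really just the structural lemma that live paths escaping $R(\psi)$ from a node outside $R(\psi)$ must stay outside $R(\psi)$; everything after that is a clean coupling on the induced subgraph. The argument uses no special graph structure, so it applies to arbitrary influence graphs as stated.
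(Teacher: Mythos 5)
Your proof is correct and complete. Note that the paper itself does not supply an argument for this claim: it merely cites Golovin and Krause and declares the proof trivial, so there is nothing to compare against line by line. What you have written is precisely the standard argument underlying adaptive submodularity in the IC/full-adoption setting: (i) the marginal $f(\psi\cup\{(i,R(\{i\},L))\})-f(\psi)=|R(\{i\},L)\setminus R(\psi)|$ is determined only by the coins of edges with both endpoints in $V\setminus R(\psi)$, because any live path from $i\notin R(\psi)$ that enters $R(\psi)$ can never leave it (transitivity of reachability would otherwise contradict $u\notin R(\psi)$); (ii) the conditioning event $\{\psi\subseteq\phi_L\}$ is determined by the coins of edges with tail in $R(\psi)$, a disjoint set, so the conditioning washes out; (iii) the monotone coupling on the nested induced subgraphs $G_{\psi'}\subseteq G_{\psi}$ (using $R(\psi)\subseteq R(\psi')$, which follows from $\psi\subseteq\psi'$) gives the pointwise inclusion and hence the inequality in expectation. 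All three steps are justified correctly, including the check that $i\notin R(\psi')$ implies $i\notin R(\psi)$ so that $i$ lives in both induced subgraphs. This is a legitimate filling-in of a proof the paper leaves implicit, not a deviation from it.
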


An adaptive policy $\pi$ is called {\em randomized} if, for any partial realisation $\psi$, node $\pi(\psi)$ is not selected deterministically (in general), but randomly (according to a probability distribution $p_{\psi}$ depending on $\psi$). Given a vector $\bm y=(y_1,\ldots, y_n)$ such that $y_i\in [0,1]$ for any $i\in [n]$, we say that $\mathbb{P}(\pi)=\bm y$ if the probability that each node $i$ belongs to $dom(\psi_{\pi,L})$ is $y_i$, where $\psi_{\pi,L}$ is the partial realisation returned by Algorithm \ref{ad_alg} with policy $\pi$. Let $OPT_A(G,\bm y)$ be the optimal expected influence spread $\sigma(\pi)$ over all the randomized adaptive policies $\pi$ such that $\mathbb{P}(\pi)=\bm y$.\footnote{We observe that, if $\bm y$ is arbitrary, a deterministic policy $\pi$ verifying $\mathbb{P}(\pi)=\bm y$ might not exists, and the introduction of randomization solves this issue.}

Let $\pi^*$ be an optimal adaptive policy for the adaptive influence maximization problem (with $|\pi^*|=k$), and let $\bm x=(x_1,\ldots, x_n)$ be the vector such that $\mathbb{P}(\pi^*)=\bm x$. As $|\pi^*|=k$, we have that $\sum_{i\in [n]} x_i=k$. 

For any $t\in [k]_0$, let $S_t$ be the optimal set of $t$ seed nodes in the non-adaptive influence maximization problem, i.e., such that $OPT_N(G,t)=\mathbb{E}_L(|R(S_t,L)|)$. Let $\psi_{t,L}$ be the random variable denoting the sub-realisation of $\phi_L$ such that $dom(\psi_{t,L})=S_t$. Let $\rho$ be the random variable equal to node $i\in [n]$ with probability $x_i/k$. Observe that the above random variable is well-defined, as $\sum_{i\in [n]}(x_i/k)=k/k=1$. For any $t\in [k]$, let ${\psi}_{\rho,t,L}$ be the random variable denoting the sub-realisation of $\phi_L$ such that $dom({\psi}_{\rho,t,L})=S_{t-1}\cup\{\rho\}$.

We observe that ${\psi}_{\rho,t,L}$ is the partial realisation coming from the following  {\em hybrid non-adaptive policy}: initially, we activate the first $t-1$ seed nodes as in the optimal non-adaptive solution guaranteeing an expected influence spread of $OPT_N(G,t-1)$; then, we randomly choose a node $v$ according to random variable $\rho$ and we select $v$ as $t$-th seed node (if not already selected as seed). We use this hybrid non-adaptive policy as a main tool to obtain an improved upper bound on the adaptivity gap for in-arborescences. In the following lemma, holding even for general graphs, we relate the hybrid non-adaptive policy and the optimal non-adaptive solution, with the optimal adaptive policy. The proof structure exhibits some similarities with Lemma 6 of \cite{Asadpour16} and Lemma 3.3 of \cite{Chen2019}, but in their approach, they relate non-adaptive policies based on the Poisson process and multi-linear extensions, with the optimal adaptive policy. 
\begin{lemma}\label{lem1}
Let $G$ be an arbitrary influence graph. For any $t\in [k]$, and any fixed partial realisation $\psi$ of $G$ such that $\mathbb{P}[\psi_{t-1,L}=\psi]>0$, we have  
\begin{equation*}
\sigma(R(\psi))+k\cdot \E_{L,\rho}\left[f({\psi}_{\rho,t,L})-f(\psi_{t-1,L})|\psi_{t-1,L}=\psi \right]\geq OPT_A(G,k)
\end{equation*}
\end{lemma}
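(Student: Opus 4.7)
The plan is to simplify the expectation appearing on the left-hand side and then upper-bound $OPT_A(G,k)$ by a coupling that combines monotonicity with the (non-adaptive) submodularity of the spread function.

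First, I would unpack the expectation. Since $\rho=i$ with probability $x_i/k$ independently of $L$, and the increment $f(\psi_{i,t,L})-f(\psi_{t-1,L})$, conditional on $\psi_{t-1,L}=\psi$, equals $\Delta(i|\psi)$ for $i\notin S_{t-1}$ (and vanishes for $i\in S_{t-1}$, where also $\Delta(i|\psi)=0$), we obtain
\[
k\cdot\E_{L,\rho}[f(\psi_{\rho,t,L})-f(\psi_{t-1,L})\mid \psi_{t-1,L}=\psi] = \sum_{i\in [n]} x_i\,\Delta(i|\psi).
\]
This reduces the claim to the inequality $\sigma(R(\psi)) + \sum_i x_i\,\Delta(i|\psi) \geq \sigma(\pi^*)=OPT_A(G,k)$.

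Next I would bound $\sigma(\pi^*)$ by a coupling argument. Let $D(L):=dom(\psi_{\pi^*,L})$, so that $\sigma(\pi^*)=\E_L[|R(D(L),L)|]$. For every live-edge graph $L$, adjoining the fixed set $R(\psi)$ to $D(L)$ cannot decrease the spread, giving $|R(D(L),L)|\leq|R(R(\psi)\cup D(L),L)|$. Applying the submodularity of $|R(\cdot,L)|$ for each fixed $L$, I would then decompose
\[
|R(R(\psi)\cup D(L),L)| \leq |R(R(\psi),L)| + \sum_{v\in D(L)}\bigl(|R(R(\psi)\cup\{v\},L)| - |R(R(\psi),L)|\bigr).
\]
Taking expectation over $L$, the first term on the right produces precisely $\sigma(R(\psi))$, while the second term is the expected weighted sum of non-adaptive marginal gains of $\pi^*$'s seeds over $R(\psi)$.

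It remains to bound that sum by $\sum_i x_i\,\Delta(i|\psi)$. Setting $g_v(L):=|R(R(\psi)\cup\{v\},L)| - |R(R(\psi),L)|=|R(\{v\},L)\setminus R(R(\psi),L)|$, the per-node target is $\E_L[\mathbbm{1}[v\in D(L)]\, g_v(L)]\leq x_v\,\Delta(v|\psi)$, which, combined with the marginal identity $\mathbb{P}[v\in D(L)]=x_v$ and summed over $v\in [n]$, closes the argument. The main obstacle here is controlling the correlation between the random seed set $D(L)$ chosen by $\pi^*$ and the random marginal gain $g_v(L)$, both driven by the single live-edge graph $L$. I would handle this by exploiting the product structure of $L$: only edges incident to $R(\psi)$ are constrained by the conditioning $\psi\subseteq\phi_L$, while edges outside $R(\psi)$ remain independent with the same distribution. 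Combining this decoupling with adaptive submodularity (Claim~\ref{lem0}), which relates marginal gains along $\pi^*$'s adaptive execution to the ``heavier'' observation $\psi$, should yield the required per-node bound and complete the proof.
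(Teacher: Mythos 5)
Your first step --- unpacking the conditional expectation to get $k\cdot \E_{L,\rho}[f(\psi_{\rho,t,L})-f(\psi_{t-1,L})\mid\psi_{t-1,L}=\psi]=\sum_i x_i\,\Delta(i|\psi)$ --- is correct and coincides with the opening of the paper's proof. The second half departs from the paper and has a genuine gap at exactly the point you flag as ``the main obstacle''. The paper does not decompose $OPT_A(G,k)$ pointwise in $L$. It passes to an auxiliary \emph{randomized} policy $\pi'$ with marginals $\bm x'$ ($x_i'=1$ on $R(\psi)$, $x_i'=x_i$ elsewhere), so that $OPT_A(G,k)\leq OPT_A(G,\bm x)\leq OPT_A(G,\bm x')$, and $\pi'$ may be assumed to seed all of $R(\psi)$ first. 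Every partial realisation $\psi'$ it subsequently observes then contains $\psi$, so its sequential \emph{expected} marginal gains have the form $\E[\Delta(i|\psi')]$ with $\psi'\supseteq\psi$, to which Claim~\ref{lem0} applies term by term; summing against the selection probabilities $x_i'$ yields the bound. This restructuring is precisely what makes adaptive submodularity usable.

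Your route instead requires the per-node bound $\E_L\bigl[\mathbbm{1}[v\in D(L)]\,g_v(L)\bigr]\leq x_v\,\Delta(v|\psi)$, i.e., that the selection event $\{v\in D(L)\}$ and the realized gain $g_v(L)$ are non-positively correlated. Neither tool you invoke delivers this. The product structure of $L$ can show $\E_L[g_v(L)]\leq\Delta(v|\psi)$ (nodes counted by $g_v(L)$ are reached from $v$ along paths avoiding $R(\psi)$, which use only edges unaffected by the conditioning $\psi\subseteq\phi_L$), but it says nothing about the joint distribution of $g_v(L)$ and $\mathbbm{1}[v\in D(L)]$, both of which are driven by the same live-edge graph through the feedback. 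Claim~\ref{lem0} compares conditional expected gains $\Delta(i|\psi')$ for nested realisations $\psi'\supseteq\psi$, but the realisations observed by $\pi^*$ do not contain $\psi$ (its earlier seeds are not $S_{t-1}$), and $g_v(L)$ is a pointwise quantity measured against $R(R(\psi),L)$ rather than against the policy's own previously activated set. Consequently the usual self-correction --- whatever the policy has learned about $v$'s spread is already activated and hence excluded from $v$'s marginal --- does not apply to $g_v$: a policy can select $v$ preferentially on realisations where $g_v(L)$ is large, making the correlation positive, and nothing in your sketch rules this out for $\pi^*$. To close the argument you would essentially have to reintroduce the paper's device of forcing the comparison policy to start from $\psi$ and measuring marginals as conditional expectations given the observed realisation.
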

\begin{proof}
We have 
\begin{align}
&k\cdot \E_{L,\rho}\left[f({\psi}_{\rho,t,L})-f(\psi_{t-1,L})|\psi_{t-1,L}=\psi \right]\nonumber\\
=&k\cdot \sum_{i\in [n]}\mathbb{P}[\rho=i]\cdot \Delta(i|\psi)\nonumber\\
=&k\cdot \sum_{i\in [n]\setminus R(\psi)}\frac{x_i}{k}\cdot \Delta(i|\psi)\label{eq0}\\
=&\sum_{i\in [n]\setminus R(\psi)}x_i\cdot \Delta(i|\psi),\label{eq1}
\end{align}
where \eqref{eq0} holds since $\Delta(i|\psi)=0$ for any $i\in R(\psi)$. 

Let $\bm x'=(x_1',\ldots x_n')$ be the vector such that $x_i'=1$ if $i\in R(\psi)$, and $x_i'=x_i$ otherwise. As $x_i'\geq x_i$ for any $i\in [n]$ we have  
\begin{equation}\label{eq2}
OPT_A(G,k)\leq OPT_A(G,\bm x)\leq OPT_A(G,\bm x').
\end{equation}

Let $\pi'$ be the optimal randomized adaptive policy such that $\mathbb{P}(\pi')=\bm x'$. Policy $\pi'$ selects each node in $R(\psi)$ with probability $1$, thus we can assume that such seed nodes are selected at the beginning and that the adaptive policy starts by observing the resulting partial realisation. Furthermore, we can assume that, for any partial realisation $\psi'$, $\pi'$ does not select any node $i\in R(\psi')$, otherwise there is no increase of the influence spread. Given $j\in [n]$, let $\Delta'(j)$ denote the expected increment of the influence spread when $\pi'$ selects the $j$-th seed node (in order of selection, and without considering in the count the initial seeds of $R(\psi)$); analogously, let $\Delta'(j|i)$ denote the expected increment of the influence spread when $\pi'$ selects the $j$-th seed node, conditioned by the fact that the $j$-th seed is node $i$.\footnote{If an execution of $\pi'$ requires less than $j$ steps, we assume that the increase of the influence spread at step $j$(that contributes to the expected values $\Delta'(j)$ and $\Delta'(j|i)$) is null.} We get
\begin{align}
 &OPT_A(G,\bm x')\nonumber\\
  =&\sigma(R(\psi))+\sum_j \Delta'(j)\nonumber\\
  =&\sigma(R(\psi))+\sum_j \sum_{i\in [n]\setminus R(\psi)}\mathbb{P}[\text{the $j$-th seed node is $i$}]\cdot \Delta'(j|i)\nonumber\\
  =&\sigma(R(\psi))+\sum_{i\in [n]\setminus R(\psi)}\sum_j \mathbb{P}[\text{the $j$-th seed node is $i$}]\cdot \Delta'(j|i)\nonumber\\
    =&\sigma(R(\psi))+\sum_{i\in [n]\setminus R(\psi)}\sum_j \mathbb{P}[\text{the $j$-th seed node is $i$}]\cdot\nonumber\\
    &\ \ \ \cdot \E_{\pi'}[\Delta(i|\psi')|\text{$i=\pi'(\psi')$ for some $\psi'\supseteq \psi$ observed at step $j$}]\nonumber\\
        \leq &\sigma(R(\psi))+\sum_{i\in [n]\setminus R(\psi)}\sum_j \mathbb{P}[\text{the $j$-th seed node is $i$}]\cdot \Delta(i|\psi)\label{eq3.0}\\
                =&\sigma(R(\psi))+\sum_{i\in [n]\setminus R(\psi)}\mathbb{P}[\text{$i$ is selected as seed}]\cdot \Delta(i|\psi)\nonumber\\
=& \sigma(R(\psi))+\sum_{i\in [n]\setminus R(\psi)}x_i'\cdot \Delta(i|\psi)\nonumber\\
= & \sigma(R(\psi))+\sum_{i\in [n]\setminus R(\psi)}x_i\cdot \Delta(i|\psi),\label{eq3}
\end{align}
where \eqref{eq3.0} holds since $\Delta(i|\psi')\leq \Delta(i|\psi)$ for any partial realisation $\psi'\supseteq \psi$ by adaptive submodularity (Claim \ref{lem0}). By putting together \eqref{eq1}, \eqref{eq2}, and \eqref{eq3}, we get
\begin{align*}
& \sigma(R(\psi))+k\cdot \E_{L,\rho}\left[f({\psi}_{\rho,t,L})-f(\psi_{t-1,L})|\psi_{t-1,L}=\psi \right]\\
=& \sigma(R(\psi))+\sum_{i\in [n]\setminus R(\psi)}x_i\cdot \Delta(i|\psi)\\
\geq & OPT_A(G,\bm x')\\
\geq & OPT_A(G,k),
\end{align*}
thus showing the claim. 
\end{proof}

The following lemma is similar to Lemma 3.8 in~\cite{Chen2019} (for the proof, see the appendix). 

\begin{lemma}\label{lem2}
When the input influence graph $G$ is an in-arborescence, we have that 
\begin{equation*}
\sigma(R(\psi_{t-1,L}))\leq f(\psi_{t-1,L})+OPT_N(G,t-1)
\end{equation*}
for any $t\in [k]$ and any live-edge graph $L$. 
\end{lemma}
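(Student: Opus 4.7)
My plan is to fix a live-edge graph $L$ and set $R:=R(\psi_{t-1,L})$, so $f(\psi_{t-1,L})=|R|$. For every live-edge graph $L'$ we have $R\subseteq R(R,L')$ deterministically, so I can split
\[
\sigma(R)=\mathbb{E}_{L'}[|R(R,L')|]=|R|+\mathbb{E}_{L'}[|R(R,L')\setminus R|],
\]
and the lemma reduces to showing $\mathbb{E}_{L'}[|R(R,L')\setminus R|]\leq OPT_N(G,t-1)$. The intuitive content is that only the ``frontier'' of $R$ can activate genuinely new nodes in a fresh cascade, and I want to charge this frontier to a non-adaptive solution of size $t-1$.

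Define $T_R\subseteq R$ to be the set of \emph{topmost} nodes of $R$: those $u\in R$ whose tree-parent (if any) does not lie in $R$. The first key step is the structural bound $|T_R|\leq t-1$. Here the in-arborescence hypothesis is spent: each seed $s\in S_{t-1}$ has at most one outgoing edge in $G$, so the live edges of $L$ trace from $s$ a unique upward path ending at some highest reachable ancestor $u_s$, and $R$ is the union of these at most $t-1$ paths. Any interior node of such a path has its parent on the same path, hence in $R$, so it cannot be topmost; consequently $T_R\subseteq\{u_s:s\in S_{t-1}\}$ and $|T_R|\leq t-1$.

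The second step is the inclusion $R(R,L')\setminus R\subseteq R(T_R,L')$. Given $w\in R(R,L')\setminus R$, there is a path of live edges in $L'$ from some $v\in R$ to $w$; since in the in-arborescence every node has a unique parent, this path moves monotonically upward. Let $u$ be the last node on this path that still belongs to $R$; its parent (the next node on the path) then lies outside $R$, so $u\in T_R$, and $w\in R(\{u\},L')\subseteq R(T_R,L')$. Taking expectations and using the monotonicity of $\sigma$ together with $|T_R|\leq t-1$,
\[
\mathbb{E}_{L'}[|R(R,L')\setminus R|]\leq\mathbb{E}_{L'}[|R(T_R,L')|]=\sigma(T_R)\leq OPT_N(G,t-1),
\]
which combined with the splitting above yields the claim.

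The main obstacle is the cardinality bound $|T_R|\leq t-1$: this is exactly where the in-arborescence hypothesis is indispensable. In a general directed graph the live-edge paths from a single seed can branch, so the frontier of $R$ can be much larger than the seed budget, and the reduction to a non-adaptive $(t-1)$-seed solution collapses. The tree structure forces frontier size $\leq$ seed budget, letting us absorb $\sigma(R)-|R|$ into $OPT_N(G,t-1)$.
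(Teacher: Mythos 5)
Your proof is correct and follows essentially the same route as the paper: the paper also writes $\sigma(R(\psi))\leq f(\psi)+\sigma(\partial R(\psi))$ with $\partial R$ the frontier of $R$ (your $T_R$), bounds $|\partial R(\psi_{t-1,L})|\leq t-1$ using the out-degree-one structure of in-arborescences, and concludes $\sigma(\partial R(\psi_{t-1,L}))\leq OPT_N(G,t-1)$. Your write-up merely makes explicit the path-union and last-node-in-$R$ arguments that the paper leaves implicit.
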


Armed with the above lemmas, we can now prove Theorem~\ref{thm1}.

\begin{proof}[Proof of Theorem~\ref{thm1}]
For any $t\in [k]$, we have
\begin{align}
&k\cdot (OPT_N(G,t)-OPT_N(G,t-1))\nonumber\\
=&k\cdot (\sigma(S_t)-\sigma(S_{t-1}))\nonumber\\
=&k\cdot (\E_L[f(\psi_{t,L})]-\E_L[f(\psi_{t-1,L})])\nonumber\\
\geq &k\cdot (\E_{L,\rho}[f({\psi}_{\rho,t,L})]-\E_L[f(\psi_{t-1,L})])\label{eqthm1.0}\\
= &k\cdot (\E_{L,\rho}[f({\psi}_{\rho,t,L})]-\E_{L,\rho}[f(\psi_{t-1,L})])\nonumber\\
=&k\cdot \E_{L,\rho}[f({\psi}_{\rho,t,L})-f(\psi_{t-1,L})]\nonumber\\
=& \E_{\psi_{t-1,L}}\left[k\cdot \E_{L,\rho}[f({\psi}_{\rho,t,L})-f(\psi_{t-1,L})|\psi_{t-1,L}]\right]\nonumber\\
\geq &\E_{\psi_{t-1,L}}[OPT_A(G,k)-\sigma(R(\psi_{t-1,L}))]\label{eqthm1}\\
\geq &\E_{\psi_{t-1,L}}[OPT_A(G,k)-f(\psi_{t-1,L})-OPT_N(G,t-1)]\label{eqthm3}\\
=& \E_{\psi_{t-1,L}}[OPT_A(G,k)]-\E_{\psi_{t-1,L}}[f(\psi_{t-1,L})]-\E_{\psi_{t-1,L}}[OPT_N(G,t-1)]\nonumber\\
= &OPT_A(G,k)-\sigma(S_{t-1})-OPT_N(G,t-1)\nonumber\\
= &OPT_A(G,k)-2\cdot OPT_N(G,t-1)\label{eqthm4},
\end{align}
where \eqref{eqthm1.0} holds since $dom(\psi_{t,L})$ is the optimal set of $t$ seed nodes for the non-adaptive influence maximization problem, \eqref{eqthm1} comes from Lemma \ref{lem1}, and \eqref{eqthm3} comes from Lemma \ref{lem2}. Thus, by \eqref{eqthm4}, we get $k\cdot (OPT_N(G,t)-OPT_N(G,t-1))\geq OPT_A(G,k)-2\cdot OPT_N(G,t-1)$, that after some manipulations leads to the following recursive relation:
\begin{equation}\label{fundeqthm}
OPT_N(G,t)\geq \frac{1}{k}\cdot OPT_A(G,k)+\left(1-\frac{2}{k}\right)\cdot OPT_N(G,t-1),\quad \forall t\in [k].
\end{equation}
By applying iteratively \eqref{fundeqthm}, we get
\begin{equation*}
OPT_N(G,k)\geq \frac{1}{k}\cdot \sum_{t=0}^{k-1}\left(1-\frac{2}{k}\right)^{t}\cdot OPT_A(G,k)=\frac{1-\left(1-2/k\right)^k}{2}\cdot OPT_A(G,k),
\end{equation*}
that leads to 
\begin{equation}
\frac{OPT_A(G,k)}{OPT_N(G,k)}\leq \frac{2}{1-(1-2/k)^k}\leq \frac{2}{1-e^{-2}} = \frac{2e^2}{e^{2}-1},
\end{equation}
and this shows the claim. 
\end{proof}
\section{Adaptivity Gap for General Influence  Graphs}\label{sec_gen}
In this section, we exhibit upper bounds on the $k$-adaptivity gap of general graphs. In the following theorem, we first give an upper bound that is linear in the number of seed nodes (see the appendix for the proof). 
\begin{theorem}\label{lemk}
Given an arbitrary class of influence graphs $\mathcal{G}$ and $k\geq 2$, we get $AG(\mathcal{G},k)\leq k$. 
\end{theorem}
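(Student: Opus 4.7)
The plan is to use adaptive submodularity (Claim \ref{lem0}) together with the monotonicity of $OPT_N$ in the number of seeds to show that no adaptive policy can gain more than a factor $k$ over even a single best seed. More precisely, I will compare the optimal adaptive policy against the single-seed non-adaptive optimum $OPT_N(G,1)$, and then invoke $OPT_N(G,1)\leq OPT_N(G,k)$.

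First, fix an influence graph $G\in\mathcal{G}$ and let $v^\ast\in\arg\max_{v\in V}\sigma(\{v\})$, so that $\sigma(\{v^\ast\})=OPT_N(G,1)$. Let $\pi^\ast$ be an optimal adaptive policy with $|\pi^\ast|=k$. I would write the expected influence spread of $\pi^\ast$ as a telescoping sum of expected marginal increments: if $\psi_{t}$ denotes the (random) partial realisation observed after $t$ seed activations of $\pi^\ast$, then
\begin{equation*}
\sigma(\pi^\ast)=\sum_{t=1}^{k}\E\!\left[\Delta\!\left(\pi^\ast(\psi_{t-1})\,\big|\,\psi_{t-1}\right)\right].
\end{equation*}

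Next, for each $t$, by adaptive submodularity (Claim \ref{lem0}) applied with $\psi=\emptyset\subseteq \psi_{t-1}$ and the seed $i=\pi^\ast(\psi_{t-1})$, we have $\Delta(i\,|\,\psi_{t-1})\leq \Delta(i\,|\,\emptyset)=\sigma(\{i\})\leq \sigma(\{v^\ast\})$ (the last inequality holds pointwise, whatever node $\pi^\ast$ happens to choose at step $t$). Taking expectations, each term in the telescoping sum is at most $\sigma(\{v^\ast\})$, so
\begin{equation*}
OPT_A(G,k)=\sigma(\pi^\ast)\leq k\cdot \sigma(\{v^\ast\})=k\cdot OPT_N(G,1)\leq k\cdot OPT_N(G,k),
\end{equation*}
where the final inequality is the trivial monotonicity of $OPT_N(G,\cdot)$ (a non-adaptive solution with $k$ seeds can always include the single best seed, and picking extra seeds cannot decrease the spread). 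Dividing by $OPT_N(G,k)$ and taking the supremum over $G\in\mathcal{G}$ yields $AG(\mathcal{G},k)\leq k$.

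I do not expect any real obstacle: the only substantive input is adaptive submodularity (already quoted as Claim \ref{lem0}), which allows us to replace each conditional marginal increment by the unconditional spread of a single seed. The remaining care is merely to keep the measurability bookkeeping clean when the seed chosen at step $t$ is a random function of $\psi_{t-1}$, but this is handled transparently by the pointwise bound $\sigma(\{i\})\leq \sigma(\{v^\ast\})$ before taking expectations.
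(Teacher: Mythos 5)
Your proof is correct and follows essentially the same route as the paper's: telescope $\sigma(\pi^\ast)$ into expected marginal increments, bound each conditional increment $\Delta(\pi^\ast(\psi_{t-1})\,|\,\psi_{t-1})$ by $\Delta(\cdot\,|\,\emptyset)=\sigma(\{\cdot\})\leq OPT_N(G,1)$ via adaptive submodularity (Claim~\ref{lem0}), and conclude with $OPT_N(G,1)\leq OPT_N(G,k)$. No issues.
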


In the next theorem we give an upper bound on the adaptivity gap that is sublinear in the number of nodes of the considered graph.

\begin{theorem}\label{thm2}
If $\mathcal{G}$ is the class of influence graphs having at most $n$ nodes, we get $AG(\mathcal{G})\leq \lceil n^{1/3}\rceil .$ 
\end{theorem}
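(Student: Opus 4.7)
The plan is a case analysis on $k$, in the spirit of Theorem~\ref{thm1} but with a crucial modification. For $k \leq \lceil n^{1/3}\rceil$, Theorem~\ref{lemk} immediately gives $AG(\mathcal{G},k) \leq k \leq \lceil n^{1/3}\rceil$. For $k \geq \lceil n^{1/3}\rceil^2$, the trivial bounds $OPT_A(G,k) \leq n$ and $OPT_N(G,k) \geq k$ (each of the $k$ seed nodes is itself active) yield $AG(\mathcal{G},k) \leq n/k \leq n^{1/3}$.

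The main work concerns the middle regime $\lceil n^{1/3}\rceil < k < \lceil n^{1/3}\rceil^2$. Here I would extend the hybrid non-adaptive argument of Theorem~\ref{thm1}. Lemma~\ref{lem1} already holds for general graphs, so the only missing piece is an analogue of Lemma~\ref{lem2}. Setting $M := OPT_N(G,1) = \max_v \sigma(\{v\})$, submodularity of the IC spread function gives $\sigma(R(\psi)) \leq |R(\psi)| \cdot M = f(\psi)\cdot M$ for every partial realisation $\psi$. Plugging this into the calculation of Theorem~\ref{thm1} in place of Lemma~\ref{lem2} yields the recursion $OPT_N(G,t) \geq \frac{1}{k} OPT_A(G,k) + (1 - M/k)\, OPT_N(G,t-1)$ for $t \in [k]$, which, whenever $M \leq k$, iterates to $AG(\mathcal{G},k) \leq M/\bigl(1 - (1-M/k)^k\bigr) \leq M/(1-e^{-M}) = O(M)$.

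To conclude, I would combine the recursion bound with two elementary bounds: $AG(\mathcal{G},k) \leq \min(k, M)$, coming from $OPT_A \leq k\cdot M$ (adaptive submodularity) together with $OPT_N \geq \max(k,M)$; and $AG(\mathcal{G},k) \leq n/M$, from $OPT_A \leq n$ and $OPT_N \geq M$. A subcase analysis on $M$ handles the extremes: if $M \leq n^{1/3}$ then $AG \leq \min(k,M) \leq M \leq n^{1/3}$, and if $M \geq n^{2/3}$ then $AG \leq n/M \leq n^{1/3}$.

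The main obstacle will be the intermediate sub-case where both $k$ and $M$ lie in $(n^{1/3}, n^{2/3})$. Here the naive combination of the bounds $\min(k, n/k, M, n/M)$ only certifies $\sqrt{n}$, so reaching $n^{1/3}$ requires a finer use of the tools above---for instance, applying the recursion after an initial phase of non-adaptive selections has already pushed $OPT_N(G,t)$ high enough that the adaptive spread is dominated by the required factor, or sharpening the bound on $\sigma(R(\psi))$ by accounting for overlaps in the reached set rather than using the loose union-based estimate $f(\psi)\cdot M$.
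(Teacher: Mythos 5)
Your first two cases are fine (the paper handles the small-$k$ case identically via Theorem~\ref{lemk}, and your $OPT_N(G,k)\geq k$ observation for large $k$ is a valid, if slightly different, way to dispatch one extreme), and your middle-regime recursion via Lemma~\ref{lem1} is the right skeleton. But the proof as written has a genuine gap, which you yourself flag: in the subcase where both $k$ and $M=OPT_N(G,1)$ lie in $(n^{1/3},n^{2/3})$, the bounds $\min(k,\,O(M),\,n/M)$ only certify $\sqrt{n}$, and no argument closing this to $n^{1/3}$ is supplied. The root cause is that $\sigma(R(\psi))\leq f(\psi)\cdot M$ charges every reached node the worst single-node spread $M$, which is too lossy.

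The missing idea is to charge each reached node $OPT_N(G,k)/k$ instead of $M$. Lemma~\ref{lemthm2} states that for any set $U$ with $|U|\geq k$ one has $\sigma(U)\leq \frac{|U|}{k}\cdot OPT_N(G,k)$ (a consequence of submodularity applied to a greedy ordering of $U$), and since $R(\psi_{k,L})$ always contains the $k$ seeds, monotonicity gives
$\E_{\psi_{t-1,L}}[\sigma(R(\psi_{t-1,L}))]\leq \E_{\psi_{k,L}}[\sigma(R(\psi_{k,L}))]\leq \frac{\E[|R(\psi_{k,L})|]}{k}\cdot OPT_N(G,k)=\frac{OPT_N(G,k)^2}{k}$.
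This replaces your $M$-dependent loss term by $OPT_N(G,k)^2/k$, and the correct case split is then on $Q:=OPT_N(G,k)$ rather than on $M$: if $Q\geq(\lceil n^{1/3}\rceil)^2$ the trivial bound $OPT_A(G,k)\leq n$ already gives $AG\leq n/Q\leq\lceil n^{1/3}\rceil$; otherwise $Q/k\leq((\lceil n^{1/3}\rceil)^2-1)/(\lceil n^{1/3}\rceil+1)=\lceil n^{1/3}\rceil-1$, the per-step inequality becomes $k\,(OPT_N(G,t)-OPT_N(G,t-1))\geq OPT_A(G,k)-(\lceil n^{1/3}\rceil-1)\,OPT_N(G,k)$, and summing over $t\in[k]$ telescopes directly to $OPT_A(G,k)\leq\lceil n^{1/3}\rceil\cdot OPT_N(G,k)$ with no geometric series needed. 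Your closing remark about ``accounting for overlaps in the reached set'' points in roughly this direction, but the actual mechanism is the $|U|/k$ scaling from Lemma~\ref{lemthm2}, not a refinement of the union bound; without it the intermediate subcase remains unproved.
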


Let $G=(V,E,(p_{uv})_{(u,v)\in E})$ be the input influence graph. To show Theorem \ref{thm2}, we recall the preliminary notations considered for the proof of Theorem \ref{thm1}, and we give a further preliminary lemma (see the appendix for the proof of the lemma). 
\begin{lemma}\label{lemthm2}
Given a set $U\subseteq V$ of cardinality $h\geq k$, we have $$\sigma(U)\leq \frac{h}{k}\cdot OPT_N(G,k).$$
\end{lemma}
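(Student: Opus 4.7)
The key fact is that the expected influence spread $\sigma$ is a monotone submodular function with $\sigma(\emptyset)=0$, as recalled in the introduction. My plan is to derive the claim via an averaging argument: if $T$ is a uniformly random $k$-subset of $U$, then every realisation of $T$ satisfies $\sigma(T)\le OPT_N(G,k)$, so it is enough to prove $\mathbb{E}[\sigma(T)]\ge \frac{k}{h}\sigma(U)$, since multiplying by $h/k$ gives the statement.

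To obtain this inequality, I would use a random-permutation argument. Let $\pi$ be a uniformly random permutation of $U=\{u_1,\dots,u_h\}$ and set $T_i:=\{u_{\pi(1)},\dots,u_{\pi(i)}\}$, so that $T_i$ is uniformly distributed among all $i$-subsets of $U$. Define $\mu_i:=\mathbb{E}[\sigma(T_i)-\sigma(T_{i-1})]$. Then telescoping gives $\sigma(U)=\sum_{i=1}^{h}\mu_i$ and $\mathbb{E}[\sigma(T_k)]=\sum_{i=1}^{k}\mu_i$. If one can show that $\mu_1\ge\mu_2\ge\cdots\ge\mu_h$, then the standard fact that a prefix average of a non-increasing sequence is at least the full average yields $\frac{1}{k}\sum_{i=1}^{k}\mu_i\ge \frac{1}{h}\sum_{i=1}^{h}\mu_i$, which is exactly the desired inequality $\mathbb{E}[\sigma(T_k)]\ge\frac{k}{h}\sigma(U)$.

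The monotonicity of $(\mu_i)$ is the main technical step, but it follows from a simple swap argument. For a fixed $\pi$, let $\pi'$ be the permutation obtained from $\pi$ by exchanging the elements in positions $i$ and $i+1$; since this is an involution on permutations, $\pi'$ is uniformly distributed whenever $\pi$ is. Writing $w=u_{\pi(i)}$ and $v=u_{\pi(i+1)}$, the marginal contribution at step $i+1$ under $\pi$ equals $\sigma(T_{i-1}\cup\{w,v\})-\sigma(T_{i-1}\cup\{w\})$, while the marginal at step $i$ under $\pi'$ equals $\sigma(T_{i-1}\cup\{v\})-\sigma(T_{i-1})$. Applying submodularity of $\sigma$ to the nested sets $T_{i-1}\subseteq T_{i-1}\cup\{w\}$ with the added element $v$, the former is at most the latter; taking expectation over $\pi$ gives $\mu_{i+1}\le\mu_i$.

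I do not anticipate any serious obstacle here, since the argument only relies on the monotone submodularity of $\sigma$, a property already available for the IC model. As an alternative route that avoids the permutation bookkeeping, one could invoke the fact that every monotone submodular function with $f(\emptyset)=0$ is fractionally subadditive (XOS) and apply it to the uniform fractional cover of $U$ by all its $\binom{h}{k}$ $k$-subsets, each with weight $1/\binom{h-1}{k-1}$: this yields $\sigma(U)\le\frac{h}{k}\,\mathbb{E}_{|T|=k,\,T\subseteq U}[\sigma(T)]\le\frac{h}{k}\,OPT_N(G,k)$ in a single line.
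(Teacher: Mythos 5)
Your proof is correct and follows essentially the same approach as the paper: both decompose $\sigma(U)$ as a telescoping sum of marginal gains along an ordering of $U$, use submodularity of $\sigma$ to show these marginals are non-increasing, and conclude via the fact that the prefix average of a non-increasing sequence dominates the full average. The only difference is that the paper orders $U$ \emph{greedily}, which makes the monotonicity of the marginals immediate and produces a single deterministic $k$-subset $U_k$ with $\sigma(U_k)\leq OPT_N(G,k)$, whereas you average over uniformly random orderings and therefore need the additional (correct) swap argument to establish $\mu_{i+1}\leq\mu_i$.
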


We use Theorem~\ref{lemk} and Lemma~\ref{lemthm2} to show Theorem~\ref{thm2}.

\begin{proof}[Proof of Theorem~\ref{thm2}]
We assume w.l.o.g. that $k>\lceil n^{1/3}\rceil $ and that $OPT_N(G,k)<(\lceil n^{1/3}\rceil)^2$. Indeed, if $k\leq  \lceil n^{1/3}\rceil $, by Theorem \ref{lemk} the claim holds, and if $OPT_N(G,k)\geq (\lceil n^{1/3}\rceil)^2$, then $\frac{OPT_A(G,k)}{OPT_N(G,k)}\leq \frac{|V|}{OPT_N(G,k)}\leq \frac{n}{(\lceil n^{1/3}\rceil)^2}\leq \lceil n^{1/3}\rceil $, and the claim holds as well. 
For any $t\in [k]$, we have
\begin{align}
&k\cdot (OPT_N(G,t)-OPT_N(G,t-1))\nonumber\\
\geq &k\cdot (\E_{L,\rho}[f({\psi}_{\rho,t,L})]-\E_{L,\rho}[f(\psi_{t-1,L})])\nonumber\\
=& \E_{\psi_{t-1,L}}\left[k\cdot \E_{L,\rho}[f({\psi}_{\rho,t,L})-f(\psi_{t-1,L})|\psi_{t-1,L}]\right]\nonumber\\
\geq &\E_{\psi_{t-1,L}}[OPT_A(G,k)-\sigma(R(\psi_{t-1,L}))]\label{eqthm12}\\
=&\E_{\psi_{t-1,L}}[OPT_A(G,k)]-\E_{\psi_{t-1,L}}[\sigma(R(\psi_{t-1,L}))]\nonumber\\
\geq &\E_{\psi_{t-1,L}}[OPT_A(G,k)]-\E_{\psi_{k,L}}[\sigma(R(\psi_{k,L}))]\nonumber\\
\geq &\E_{\psi_{t-1,L}}[OPT_A(G,k)]-\E_{\psi_{k,L}}\left[\frac{|R(\psi_{k,L})|}{k}\cdot OPT_N(G,k)\right]\label{eqthm32}\\
= &OPT_A(G,k)-\frac{\E_{\psi_{k,L}}[|R(\psi_{k,L})|]}{k}\cdot OPT_N(G,k)\nonumber\\
\geq  &OPT_A(G,k)-\frac{\E_{\psi_{k,L}}[|R(\psi_{k,L})|]}{\lceil n^{1/3}\rceil+1}\cdot ((\lceil n^{1/3}\rceil)^2-1)\label{eqthm42}\\
=&OPT_A(G,k)-(\lceil n^{1/3}\rceil-1)\cdot \E_{\psi_{k,L}}[|R(\psi_{k,L})|]\nonumber\\
=&OPT_A(G,k)-(\lceil n^{1/3}\rceil-1)\cdot OPT_N(G,k)\label{eqthm52},
\end{align}
where \eqref{eqthm12} comes from Lemma \ref{lem1}, \eqref{eqthm32} comes from Lemma \ref{lemthm2}, and \eqref{eqthm42} comes from the hypothesis $k>\lceil n^{1/3}\rceil $ and $OPT_N(G,k)<(\lceil n^{1/3}\rceil)^2$. By \eqref{eqthm52}, we get $OPT_N(G,t)-OPT_N(G,t-1)
\geq (OPT_A(G,k)-(\lceil n^{1/3}\rceil-1)\cdot OPT_N(G,k))/k$ for any $t\in [k]$, and by summing such inequality over all $t\in [k]$, we get
\begin{align}
& OPT_N(G,k)\nonumber\\
=&\sum_{t=1}^k (OPT_N(G,t)-OPT_N(G,t-1))\nonumber\\
\geq &\sum_{t=1}^k\frac{OPT_A(G,k)-(\lceil n^{1/3}\rceil-1)\cdot OPT_N(G,k)}{k}\nonumber\\
=&OPT_A(G,k)-(\lceil n^{1/3}\rceil-1)\cdot OPT_N(G,k).\label{eqthm62}
\end{align}
Finally, \eqref{eqthm62} implies that $OPT_A(G,k)\leq \lceil n^{1/3}\rceil \cdot OPT_N(G,k)$, and this shows the claim. 
\end{proof}
\section{Adaptivity Gap for Other Influence Graphs}\label{sec_other}
In this section, we extend the results obtained in Theorem \ref{thm1}, and we get upper bounds on the adaptivity gap of other classes of influence graphs. In particular, we consider the class of {\em $\alpha$-bounded graphs}: a class of undirected graphs parametrized by an integer $\alpha\geq 0$ that includes several known graph topologies. In the following, when we refer to undirected influence graphs, we implicitly assume that, for any undirected edge $\{u,v\}$, there are two directed edges $(u,v)$ and $(v,u)$ having respectively two (possibly) distinct probabilities $p_{uv}$ and $p_{vu}$. 

\subparagraph*{$\alpha$-bounded graphs.} Given an undirected graph $G=(V,E)$ and a node $v\in V$, let $deg_v(G)$ be the degree of node $v$ in graph $G$. Given an integer $\alpha\geq 0$, graph $G$ is an {\em $\alpha$-bounded graph} if $\sum_{v\in V:deg_v(G)>2}deg_v(G)\leq \alpha$, i.e., the sum all the node degrees higher than $2$ is at most $\alpha$. In the following, we exhibit some interesting classes of $\alpha$-bounded graphs:
\begin{itemize}
\item the set of $0$-bounded graphs is made of all the graphs $G$ such that each connected component of $G$ is either an undirected path, or an undirected cycle;
\item if graph $G$ is homeomorphic to a star with $h$ edges, then $G$ is a $h$-bounded graph;
\item if graph $G$ is homeomorphic to a parallel-link graph with $h$ edges, then $G$ is a $2h$-bounded graph;
\item if graph $G$ is homeomorphic to a cycle with $h$ chords, then $G$ is a $6h$-bounded graph;
\item if graph $G$ is homeomorphic to a clique with $h$ nodes, then $G$ is a $h(h-1)$-bounded graph. 
\end{itemize}

In the following, we provide an upper bound on the adaptivity gap of $\alpha$-bounded influence graphs for any $\alpha\geq 0$. 
\begin{theorem}\label{thmlast}
Given $\alpha\geq 0$, let $\mathcal{G}$ be the class of  $\alpha$-bounded influence graphs. Then $$AG(\mathcal{G},k)\leq \min\left\{k,\frac{\alpha}{k}+2+ \frac{1}{1-(1-1/k)^k}\right\}\leq \frac{\sqrt{4(e-1)^2\alpha+(3e-2)^2}+3e-2}{2(e-1)}$$ 
for any $k\geq 2$, i.e., $AG(\mathcal{G})\leq \sqrt{\alpha}+O(1)$.
\end{theorem}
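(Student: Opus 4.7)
The plan is to prove the two bounds inside the minimum separately and then optimise over $k$. The bound $AG(\mathcal{G},k)\leq k$ is immediate from Theorem \ref{lemk}, which holds for any class of influence graphs. The other bound $\alpha/k + 2 + 1/(1-(1-1/k)^k)$ is obtained by adapting the proof of Theorem \ref{thm1}: Lemma \ref{lem1} already applies to arbitrary graphs, so the only piece to rework is an $\alpha$-bounded analogue of Lemma \ref{lem2}.

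The structural inequality I aim for has the form
\[
\sigma(R(\psi_{t-1,L})) \;\leq\; f(\psi_{t-1,L}) + 2\,OPT_N(G,k) + \alpha,
\]
for every live-edge graph $L$ and every $t\in[k]$. The key observation is that an $\alpha$-bounded graph contains at most $\alpha/3$ vertices of degree greater than two, and after removing them the graph is a disjoint union of paths and cycles. When $R(\psi_{t-1,L})$ is used as a seed set, any further spread starts from its boundary, which consists of at most $2(t-1)$ endpoints in the low-degree part (at most two per seed in $dom(\psi_{t-1,L})$) together with at most $\alpha$ edges incident to high-degree vertices. Monotonicity and submodularity of $\sigma(\cdot)$ allow one to split the at most $2(t-1)\leq 2k$ low-degree boundary endpoints into two sets of size at most $k$, each of which is a feasible non-adaptive solution, contributing $2\,OPT_N(G,k)$; the high-degree portion is crudely absorbed into the additive $\alpha$ term.

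With this analogue of Lemma \ref{lem2} in hand, I would retrace the chain of inequalities in the proof of Theorem \ref{thm1} using Lemma \ref{lem1} to obtain the recurrence
\[
OPT_N(G,t) \;\geq\; \frac{OPT_A(G,k)}{k} + \left(1-\frac{1}{k}\right)OPT_N(G,t-1) - \frac{2\,OPT_N(G,k)+\alpha}{k}.
\]
Iterating for $t=1,\ldots,k$, using $\sum_{i=0}^{k-1}(1-1/k)^i = k(1-(1-1/k)^k)$, and finally invoking the trivial bound $OPT_N(G,k)\geq k$, a short rearrangement yields $OPT_A(G,k)/OPT_N(G,k)\leq 2 + 1/(1-(1-1/k)^k) + \alpha/k$, which is the desired second bound.

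For the closed-form expression one uses the uniform estimate $1/(1-(1-1/k)^k)\leq e/(e-1)$ to majorise the second bound by $\alpha/k + (3e-2)/(e-1)$; the function $\min\{k,\, \alpha/k + (3e-2)/(e-1)\}$ is maximised at the positive root of $(e-1)k^2 - (3e-2)k - (e-1)\alpha = 0$, which equals $(3e-2 + \sqrt{(3e-2)^2 + 4(e-1)^2\alpha})/(2(e-1))$. The main obstacle is the proof of the $\alpha$-bounded analogue of Lemma \ref{lem2}: in-arborescences admit a clean rooted decomposition that directly yields $f(\psi) + OPT_N(G,t-1)$, whereas for undirected $\alpha$-bounded graphs one must carefully separate the low- and high-degree boundary contributions to keep both the multiplicative coefficient of $OPT_N(G,k)$ and the additive $\alpha$ error under control.
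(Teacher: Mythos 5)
Your overall architecture is the paper's: take the minimum of the trivial bound $AG\leq k$ from Theorem~\ref{lemk} and a bound obtained by rerunning the proof of Theorem~\ref{thm1} with Lemma~\ref{lem1} plus an $\alpha$-bounded substitute for Lemma~\ref{lem2}, then balance the two at $k=\alpha/k+2+\frac{e}{e-1}$; your boundary accounting ($\leq 2$ endpoints per low-degree component, $\leq\alpha$ from high-degree vertices) is also exactly the paper's Claim~\ref{lastclaim}. The gap is in your substitute lemma. You claim
\[
\sigma(R(\psi_{t-1,L}))\;\leq\; f(\psi_{t-1,L})+2\,OPT_N(G,k)+\alpha,
\]
with the high-degree part of the boundary ``crudely absorbed into the additive $\alpha$ term.'' This step is not justified: if $\partial A$ denotes the boundary vertices lying in components that contain a high-degree vertex, then $|\partial A|\leq\alpha$ bounds only the \emph{number} of such vertices, not their expected spread. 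Each of these $\leq\alpha$ vertices can itself trigger a long cascade (e.g.\ a degree-3 vertex feeding a long path of probability-one edges), so $\sigma(\partial A)$ can be arbitrarily larger than $\alpha$, and the inequality $\sigma(\partial A)\leq\alpha$ that your additive term implicitly relies on is false. In short, you control the cardinality of the high-degree boundary but never control its spread.

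The paper's Lemma~\ref{lemlast} fixes precisely this point: it bounds the spread of the \emph{entire} boundary by the averaging/submodularity argument of Lemma~\ref{lemthm2}, namely $\sigma(\partial R(\psi_{k,L}))\leq\frac{|\partial R(\psi_{k,L})|}{k}\,OPT_N(G,k)\leq\frac{\alpha+2k}{k}\,OPT_N(G,k)$, giving the additive term $\left(\frac{\alpha}{k}+2\right)OPT_N(G,k)$ instead of $2\,OPT_N(G,k)+\alpha$. (The same mechanism is what legitimately yields your ``$2\,OPT_N(G,k)$'' for the low-degree part, so you may as well apply it to the whole boundary at once.) With this corrected lemma the recurrence becomes
\[
OPT_N(G,t)\geq\frac{1}{k}\Bigl(OPT_A(G,k)-\bigl(\tfrac{\alpha}{k}+2\bigr)OPT_N(G,k)\Bigr)+\Bigl(1-\frac{1}{k}\Bigr)OPT_N(G,t-1),
\]
and iterating gives $\frac{OPT_A(G,k)}{OPT_N(G,k)}\leq\frac{\alpha}{k}+2+\frac{1}{1-(1-1/k)^k}$ directly, without your final appeal to $OPT_N(G,k)\geq k$. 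Your closed-form optimization over $k$ is correct and matches the paper.
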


Let $G=(V=[n],E,(p_{uv})_{(u,v)\in E})$ be an $\alpha$-bounded influence graph, and we recall the preliminary notations from Theorem \ref{thm1}. The proof of Theorem \ref{thmlast} is a non-trivial generalization of Theorem \ref{thm1}. In particular, the proof resorts to Theorem \ref{lemk} to get the upper bound of $k$, and, by following the approach of Theorem \ref{thm1}, the following technical lemma is used in place of Lemma \ref{lem2} to get the final upper bound. 
\begin{lemma}\label{lemlast}
When the input influence graph $G$ is an $\alpha$-bounded graph with $\alpha\geq 0$, we have that 
\begin{equation*}
\sigma(R(\psi_{t-1,L}))\leq f(\psi_{t-1,L})+\left(\frac{\alpha}{k}+2\right)\cdot OPT_N(G,k),
\end{equation*}
for any $t\in [k]$ and live-edge graph $L$.
\end{lemma}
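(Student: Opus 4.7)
The plan is to reduce the desired inequality to a bound on the expected spread from the outer boundary of $R(\psi_{t-1,L})$, and then to use the $\alpha$-bounded structure to decompose this boundary into a ``high-degree'' part of small size and a ``low-degree'' part lying entirely within path and cycle components.

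First, let $W := \{w \in V \setminus R(\psi_{t-1,L}) : \exists v \in R(\psi_{t-1,L}),\, (v,w) \in E\}$ denote the outer boundary of $R(\psi_{t-1,L})$. I would establish the entry-point inequality
\[
\sigma(R(\psi_{t-1,L})) \leq f(\psi_{t-1,L}) + \sigma(W),
\]
which follows from the observation that, in a fresh independent live-edge graph $L'$, any node newly reached from $R(\psi_{t-1,L})$ must be reachable from some $w \in W$ by following the first live edge that exits $R(\psi_{t-1,L})$; this containment carries over to expected sizes. The lemma then reduces to proving $\sigma(W) \leq (\alpha/k + 2) OPT_N(G,k)$.

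Next, let $H := \{v \in V : \deg_v(G) > 2\}$. The hypothesis $\sum_{v \in H} \deg_v(G) \leq \alpha$ implies both $|H| \leq \alpha/3$ and that at most $\alpha$ edges of $G$ are incident to $H$, while $G \setminus H$ is a disjoint union of paths and cycles. I would split $W = W_H \cup W_L$, where $W_H$ collects the outer endpoints of boundary edges incident to $H$ and $W_L$ collects the endpoints of boundary edges lying inside a path/cycle component of $G \setminus H$. Since $|W_H| \leq \alpha$, applying Lemma \ref{lemthm2} (together with trivial monotonicity when $|W_H| < k$) gives $\sigma(W_H) \leq (\alpha/k) OPT_N(G,k)$ up to small additive terms, contributing the $\alpha/k$ factor of the claim.

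To bound $\sigma(W_L)$, I exploit the path/cycle structure of $G \setminus H$: in each component $P$, the intersection $R(\psi_{t-1,L}) \cap P$ is a union of intervals, and each node of $W_L \cap P$ is an internal endpoint of such an interval from which the fresh sample can only spread in a single direction along $P$. A charging argument aggregating the left-excursions and right-excursions across all components, combined with submodularity of $\sigma$, should yield $\sigma(W_L) \leq 2 \cdot OPT_N(G,k)$, the factor $2$ reflecting the two directions of spread along a path. Summing the two contributions proves the lemma.

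The main obstacle is the tight bound $\sigma(W_L) \leq 2 OPT_N(G,k)$. A naive count gives $|W_L| \leq 2(k + \alpha)$ and a direct application of Lemma \ref{lemthm2} only yields $\sigma(W_L) \leq (2 + 2\alpha/k) OPT_N(G,k)$, leaving an unwanted $\alpha/k$ factor. The constant $2$ must therefore come from a structural argument: pairing each left-excursion with its right-excursion counterpart within a single interval, and charging the pair to the bidirectional spread of one representative non-adaptive seed, while keeping the total number of representatives below $k$ and handling cycles uniformly with paths. Ensuring that the charging extends cleanly across components with many intervals is the most delicate step of the proof.
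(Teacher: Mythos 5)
There is a genuine gap: your key intermediate claim $\sigma(W_L)\leq 2\cdot OPT_N(G,k)$ is false, so the charging argument you defer to cannot be completed. Consider a node $c$ joined to $m\geq 100$ disjoint paths $c,a_i^1,a_i^2,\dots,a_i^{100}$ ($i\in[m]$), with $p_{c\,a_i^1}=p_{a_i^1\,c}=1$, $p_{a_i^1\,a_i^2}=p_{a_i^2\,a_i^1}=0$, and probability $1$ on all remaining directed edges; here $H=\{c\}$ and $\alpha=m$. Take $k=2$ and $t=2$, so $S_{t-1}=S_1=\{c\}$ (indeed $\sigma(\{c\})=m+1$ is optimal for one seed) and $R(\psi_{1,L})=\{c,a_1^1,\dots,a_m^1\}$. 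Every boundary edge is $(a_i^1,a_i^2)$, which avoids $H$, so $W_H=\emptyset$ and $W_L=\{a_1^2,\dots,a_m^2\}$, whence $\sigma(W_L)=99m$, while $OPT_N(G,2)=m+100$; the ratio tends to $99$, not $2$. Two structural reasons make this unavoidable in your setup. First, you use the \emph{outer} boundary $W$: seeding $W$ directly skips the activation probabilities of the boundary edges, so $\sigma(W)$ can vastly exceed the residual spread of $R(\psi_{t-1,L})$ (which here is $0$). Second, the number of intervals of $R(\psi_{t-1,L})$ inside path/cycle components of $G\setminus H$ — and hence $|W_L|$ — is governed by $k+\alpha$, not by $k$, because intervals can be entered from high-degree nodes; no pairing of left/right excursions can charge $\Theta(\alpha)$ boundary nodes of individually large spread to only $k$ representative seeds. (A smaller issue: when $|W_H|<k$ and $\alpha<k$, Lemma \ref{lemthm2} does not apply and monotonicity only yields $\sigma(W_H)\leq OPT_N(G,k)$, not $(\alpha/k)\cdot OPT_N(G,k)$.)

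The paper's proof sidesteps all of this by working with the \emph{inner} boundary $\partial U:=\{u\in U:\exists (u,v)\in E,\ v\notin U\}$, which keeps the boundary-edge probabilities inside $\sigma(\partial R(\psi))$ (in the example above, $\sigma(\partial R(\psi_{1,L}))=m+1$). It then proves one counting claim: any $U$ with at most $k$ connected components has $|\partial U|\leq \alpha+2k$, because the components containing a node of degree greater than $2$ contribute at most $\alpha$ boundary nodes in aggregate, while each remaining component is a path or cycle and contributes at most $2$. A single application of Lemma \ref{lemthm2} to $\partial R(\psi_{k,L})$ then gives $\sigma(\partial R(\psi_{t-1,L}))\leq \frac{\alpha+2k}{k}\cdot OPT_N(G,k)$ directly — no decomposition of the boundary and no charging argument are needed. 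If you want to salvage your outline, the fix is to replace $W$ by $\partial R(\psi_{t-1,L})$ and to count boundary nodes per connected component of the reached set rather than per interval of a path component.
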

\begin{proof}
Given a subset $U\subseteq V$, let $\partial U:=\{u\in U: \exists (u,v)\in E,v\notin U\}$. We have that $\sigma(R(\psi))\leq |R(\psi)|+\sigma(\partial R(\psi))=f(\psi)+\sigma(\partial R(\psi))$ for any partial realisation $\psi$. Thus, to show the claim, it suffices to show that 
\begin{equation*}
\sigma(\partial R(\psi_{t-1,L}))\leq \left(\frac{\alpha}{k}+2\right)\cdot OPT_N(G,k).
\end{equation*}

Let $U\subseteq V$ such that $U$ has at most $k$ connected components. Let $A$ be the set of connected components containing at least one node of degree higher than $2$, and let $B$ be the set of the remaining components, i.e., containing nodes with degree in $[2]_0$ only. By definition of $A$ and $B$, we necessarily have that $|\partial A|\leq \sum_{v\in V:deg_v(G)>2}deg_v(G)\leq \alpha$ and $|\partial B|\leq 2k$. Thus $|\partial U|\leq |\partial A|+|\partial B|\leq \alpha+2k$, and the next claim follows.
\begin{claim}\label{lastclaim}
Given a subset $U\subseteq V$ made of at most $k$ connected components, then $|\partial U|\leq \alpha+2k$. 
\end{claim}
Now, we have that
\begin{align}
&\sigma(\partial R(\psi_{t-1,L}))\nonumber\\
\leq & \sigma(\partial R(\psi_{k,L}))\nonumber\\
\leq & \frac{|\partial R(\psi_{k,L})|}{k}\cdot OPT_N(G,k)\label{lastlem_eq2}\\
\leq & \frac{\alpha+2k}{k}\cdot OPT_N(G,k),\label{lastlem_eq3}
\end{align}
where \eqref{lastlem_eq2} comes from Lemma \ref{lemthm2}, and \eqref{lastlem_eq3} holds since $R(\psi_{k,L})$ contains at most $k$ connected components and because of Claim \ref{lastclaim}. Thus, by \eqref{lastlem_eq3}, the claim of the lemma follows. 
\end{proof}

We can now prove Theorem~\ref{thmlast}.
\begin{proof}[Proof of Theorem~\ref{thmlast}]
For any $t\in [k]$, we have
\begin{align}
&k\cdot (OPT_N(G,t)-OPT_N(G,t-1))\nonumber\\
\geq &k\cdot (\E_{L,\rho}[f({\psi}_{\rho,t,L})]-\E_{L,\rho}[f(\psi_{t-1,L})])\nonumber\\
=& \E_{\psi_{t-1,L}}\left[k\cdot \E_{L,\rho}[f({\psi}_{\rho,t,L})-f(\psi_{t-1,L})|\psi_{t-1,L}]\right]\nonumber\\
\geq &\E_{\psi_{t-1,L}}[OPT_A(G,k)-\sigma(R(\psi_{t-1,L}))]\label{eqthm1_last}\\
\geq &\E_{\psi_{t-1,L}}\left[OPT_A(G,k)-f(\psi_{t-1,L})-\left(\frac{\alpha}{k}+2\right)\cdot OPT_N(G,k)\right]\label{eqthm3_last}\\
=& \E_{\psi_{t-1,L}}[OPT_A(G,k)]-\E_{\psi_{t-1,L}}[f(\psi_{t-1,L})]-\left(\frac{\alpha}{k}+2\right)\cdot \E_{\psi_{t-1,L}}\left[OPT_N(G,k)\right]\nonumber\\
= &OPT_A(G,k)-\sigma(S_{t-1})-\left(\frac{\alpha}{k}+2\right)\cdot OPT_N(G,k)\nonumber\\
= &OPT_A(G,k)-\left(\frac{\alpha}{k}+2\right)\cdot OPT_N(G,k)-OPT_N(G,t-1)\label{eqthm4_last},
\end{align}
where \eqref{eqthm1_last} comes from Lemma \ref{lem1} and \eqref{eqthm3_last} comes from Lemma \ref{lemlast}. Thus, by \eqref{eqthm4_last}, we get the following recursive relation:
\begin{equation}\label{fundeqthm_last}
OPT_N(G,t)\geq \frac{1}{k}\cdot \left(OPT_A(G,k)-\left(\frac{\alpha}{k}+2\right)\cdot OPT_N(G,k)\right)+\left(1-\frac{1}{k}\right)\cdot OPT_N(G,t-1),
\end{equation}
for any $t\in [k]$. By applying iteratively \eqref{fundeqthm_last}, we get
\begin{align*}
&OPT_N(G,k)\\
\geq &\frac{1}{k}\cdot \left(OPT_A(G,k)-\left(\frac{\alpha}{k}+2\right)\cdot OPT_N(G,k)\right)\cdot \sum_{t=0}^{k-1} \left(1-\frac{1}{k}\right)^j\\
= &\left(OPT_A(G,k)-\left(\frac{\alpha}{k}+2\right)\cdot OPT_N(G,k)\right)\cdot \left(1-\left(1-\frac{1}{k}\right)^k\right),
\end{align*}
that, after some manipulations, leads to 
\begin{equation}\label{semifinal_bound}
\frac{OPT_A(G,k)}{OPT_N(G,k)}\leq \frac{\alpha}{k}+2+ \frac{1}{1-(1-1/k)^k} \leq \frac{\alpha}{k}+2+ \frac{1}{1-e^{-1}}~.
\end{equation}
By Theorem \ref{lemk}, we have that $\frac{OPT_A(G,k)}{OPT_N(G,k)}\leq k$, thus, by \eqref{semifinal_bound}, we get 
\begin{align}
&\frac{OPT_A(G,k)}{OPT_N(G,k)}\nonumber\\
\leq & \min\left\{k, \frac{\alpha}{k}+2+ \frac{1}{1-(1-1/k)^k} \  \right\}\nonumber\\
\leq & \min\left\{k, \frac{\alpha}{k}+2+ \frac{1}{1-e^{-1}} \  \right\}\nonumber\\
\leq  & \frac{\sqrt{4(e-1)^2\alpha+(3e-2)^2}+3e-2}{2(e-1)}\label{last_eqq},
\end{align}
where \eqref{last_eqq} is equal to the real value of $k\geq 0$ such that $k=\frac{\alpha}{k}+2+ \frac{1}{1-e^{-1}}$. By \eqref{last_eqq} the claim follows. 
\end{proof}

For the particular case of $0$-bounded influence graphs, the following theorem provides a better  upper bound on the adaptivity gap (the proof is analogue to that of Theorem \ref{thm1}, and is  deferred to the appendix). 

\begin{theorem}\label{thm_0bou}
Let $\mathcal{G}$ be the class of $0$-bounded influence graphs. Then
$$AG(\mathcal{G},k)\leq\min\left\{k,\frac{3}{1-(\max\{0,1-3/k\})^k}\right\}\leq \frac{3e^3}{e^{3}-1}\approx 3.16,\ \forall k\geq 2.$$
\end{theorem}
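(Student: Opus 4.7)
The plan is to mirror the proof of Theorem~\ref{thm1}, replacing the in-arborescence-specific Lemma~\ref{lem2} with the following analog for $0$-bounded graphs:
\[
\sigma(R(\psi_{t-1,L}))\leq f(\psi_{t-1,L})+2\cdot OPT_N(G,t-1),
\]
valid for every $t\in[k]$ and every live-edge graph $L$. The extra factor of $2$ (compared to the in-arborescence case) will propagate through the chain \eqref{eqthm1.0}--\eqref{eqthm4} and yield the recurrence $OPT_N(G,t)\geq (1/k)\cdot OPT_A(G,k)+(1-3/k)\cdot OPT_N(G,t-1)$. Iterating for $k\geq 3$ (so that $1-3/k\geq 0$) gives $OPT_A(G,k)/OPT_N(G,k)\leq 3/(1-(1-3/k)^k)$, which is bounded above by $3e^3/(e^3-1)$ via $(1-3/k)^k\leq e^{-3}$. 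The $\min\{k,\cdot\}$ form of the bound (covering the small cases $k\leq 3$, where $1-3/k<0$ would derail the iteration) is handled by invoking Theorem~\ref{lemk}.

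To prove the key lemma, observe that $\sigma(R(\psi_{t-1,L}))\leq |R(\psi_{t-1,L})|+\sigma(\partial R(\psi_{t-1,L}))=f(\psi_{t-1,L})+\sigma(\partial R(\psi_{t-1,L}))$, so it suffices to show $\sigma(\partial R(\psi_{t-1,L}))\leq 2\cdot OPT_N(G,t-1)$. The core combinatorial claim is $|\partial R(\psi_{t-1,L})|\leq 2(t-1)$: since each connected component of $G$ is a path or a cycle, any $G$-connected subset has at most two nodes with neighbors outside it, so it is enough to bound the number of $G$-connected components of $R(\psi_{t-1,L})$ by $t-1$. Each such component (an arc in $G$) contains at least one of the $t-1$ seeds, because any node $v\in R(\psi_{t-1,L})$ is reached in the live-edge graph from some seed $s$ via a path that, being a path in $G$ consisting entirely of nodes in $R(\psi_{t-1,L})$, must lie within the $G$-component of $v$; hence $s$ belongs to the same arc. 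Partitioning $\partial R(\psi_{t-1,L})$ arbitrarily into two subsets of size at most $t-1$ and using monotone submodularity of $\sigma$ (with $\sigma(\emptyset)=0$) gives $\sigma(\partial R(\psi_{t-1,L}))\leq 2\cdot OPT_N(G,t-1)$.

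With this lemma in place, substituting $f(\psi_{t-1,L})+2\cdot OPT_N(G,t-1)$ in place of $f(\psi_{t-1,L})+OPT_N(G,t-1)$ at \eqref{eqthm3} in the Theorem~\ref{thm1} derivation yields $k\cdot(OPT_N(G,t)-OPT_N(G,t-1))\geq OPT_A(G,k)-3\cdot OPT_N(G,t-1)$, which rearranges to the claimed recurrence; iterating and taking the minimum with Theorem~\ref{lemk}'s bound of $k$ gives the stated expression. I expect the main obstacle to be the combinatorial claim $|\partial R(\psi_{t-1,L})|\leq 2(t-1)$, in particular the subtle argument that every $G$-component of the reachable set must contain a seed; this is where the undirected path/cycle structure is used decisively, since the analogous statement would fail for directed or branching graphs, where reachability can enter an arc from outside.
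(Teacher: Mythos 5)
Your proposal is correct and follows essentially the same route as the paper: the same key lemma $\sigma(R(\psi_{t-1,L}))\leq f(\psi_{t-1,L})+2\cdot OPT_N(G,t-1)$ via the bound $|\partial R(\psi_{t-1,L})|\leq 2(t-1)$, the same recurrence $OPT_N(G,t)\geq \frac{1}{k}OPT_A(G,k)+(1-\frac{3}{k})OPT_N(G,t-1)$, and Theorem~\ref{lemk} for small $k$. The only (harmless) deviations are that you derive $\sigma(\partial R(\psi_{t-1,L}))\leq 2\cdot OPT_N(G,t-1)$ by splitting the boundary into two sets of size at most $t-1$ and using subadditivity, where the paper invokes Lemma~\ref{lemthm2}, and that you spell out the argument that every $G$-component of the reached set contains a seed, a step the paper asserts without proof.
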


\section{Future Works}\label{sec_future}
The first problem that is left open by our results is the gap between the constant lower bound provided by Chen and Peng~\cite{Chen2019} and our upper bound on the adaptivity gap for general graphs.
Besides trying to lower the upper bound, a possible direction could be that of increasing the lower bound by finding instances with a non constant adaptivity gap. Since the lower bound given in~\cite{Chen2019} holds even when the graph is a directed path, one direction could be to exploit different graph topologies. 

Although in this work we have improved the upper bound on the adaptivity gap of in-arborescence, there is still a gap between upper and lower bound, thus another open problem is to close it. It would be also interesting to find better bounds on the adaptivity gap of other graph classes, like e.g. out-arborescences. A further interesting research direction is to study the adaptivity gap of some graph classes modelling real-world networks, both theoretically and experimentally. 

The study of the adaptive IM problem in the Linear Threshold model is still open,  in terms of both approximation ratio and adaptivity gap. We observe that in this case the objective function is not adaptive submodular in both myopic and full-adoption feedbacks and therefore the greedy approach by Golovin and Krause~\cite{Golovin2011a} cannot be applied in this case.

The techniques introduced in this paper to relate adaptive policies with non-adaptive ones might be useful to find better upper bounds on the adaptivity gaps in different feedback models, like e.g. the myopic one, or in different graph classes. 

\bibliography{ms}

\appendix
\section{Missing Proofs of Section \ref{sec_inarb}}
\subsection{Proof of Lemma \ref{lem2}}
Given a subset $U\subseteq [n]$, let $\partial U:=\{u\in U:\exists (u,v)\in E,v\notin U\}$. We have that $\sigma(R(\psi))\leq |R(\psi)|+\sigma(\partial R(\psi))=f(\psi)+\sigma(\partial R(\psi))$ for any partial realisation $\psi$. Thus, to show the claim, it suffices to show that $\sigma(\partial R(\psi_{t-1,L}))\leq OPT_N(G,t-1)$. For in-arborescences, we have that $|\partial R(\psi_{t-1,L})|\leq |dom(\psi_{t-1,L})|=t-1$, thus $\sigma(\partial R(\psi_{t-1,L}))\leq OPT_N(G,t-1)$.\qed

\section{Missing Proofs of Section \ref{sec_gen}}
\subsection{Proof of Theorem \ref{lemk}}
Let $G=(V=[n],E,(p_{uv})_{(u,v)\in E})$ be an arbitrary influence graph. Let $\pi^*$ be an optimal adaptive policy subject to $|\pi^*|=k$, and let $\psi_{t,\pi^*,L}$ be the partial realization observed when the $t$-th seed node has been selected by Algorithm \ref{ad_alg} with policy $\pi^*$. For any fixed partial realisation $\psi$ and any $t\in [k]$, we have
\begin{align}
&\E_{L}[f(\psi_{t,\pi^*,L})-f(\psi_{t-1,\pi^*,L})|\psi_{t-1,\pi^*,L}=\psi]\nonumber\\
=&\Delta(\pi^*(\psi)|\psi)\nonumber\\
\leq & \Delta(\pi^*(\psi)|\emptyset)\label{eq1lemk}\\
=&\sigma(\{\pi^*(\psi)\})\nonumber\\
\leq & OPT_N(G,1),\label{eq2lemk}
\end{align}
where \eqref{eq1lemk} holds by adaptive submodularity (Claim  \ref{lem0}). Thus, we get
\begin{align}
&OPT_A(G,k)\nonumber\\
=&\E_{L}[f(\psi_{k,\pi^*,L})]\nonumber\\
=&\sum_{t=1}^k \E_{L}[f(\psi_{t,\pi^*,L})-f(\psi_{t-1,\pi^*,L})]\nonumber\\
=&\sum_{t=1}^k \E_{\psi_{t-1,\pi^*,L}}[\E_L[f(\psi_{t,\pi^*,L})-f(\psi_{t-1,\pi^*,L})|\psi_{t-1,\pi^*,L}]]\nonumber\\
\leq &k\cdot\E_{\psi_{t-1,\pi^*,L}}[OPT_N(G,1)]\label{eq3lemk}\\
=&k\cdot OPT_N(G,1)\nonumber\\
\leq &k\cdot OPT_N(G,k),
\end{align}
where \eqref{eq3lemk} comes from \eqref{eq2lemk}, and the claim follows. \qed

\subsection{Proof of Lemma \ref{lemthm2}}
For any $t\in [h]_0$, let $U_t:=\emptyset$ if $t=0$, and $U_t:=U_{t-1}\cup\{i_t\}$, where $$i_t\in\arg\max_{i\in U\setminus U_{t-1}}(\sigma(U_{t-1}\cup\{i\})-\sigma(U_{t-1})).$$ 
We have that $\Delta_t:=\sigma(U_{t})-\sigma(U_{t-1})$ is non-increasing in $t\in [h]$. Indeed, given $t\in [k-1]$, we have that
\begin{align}
&\Delta_{t+1}\nonumber\\
=&\sigma(U_{t+1})-\sigma(U_{t})\nonumber\\
=&\sigma(U_{t}\cup\{i_{t+1}\})-\sigma(U_{t})\nonumber\\
\leq &\sigma(U_{t-1}\cup\{i_{t+1}\})-\sigma(U_{t-1})\label{eq1lem2thm2}\\
\leq &\max_{i\in U\setminus U_{t-1}}(\sigma(U_{t-1}\cup\{i\})-\sigma(U_{t-1}))\nonumber\\
=&\sigma(U_{t-1}\cup\{i_{t}\})-\sigma(U_{t-1})\nonumber\\
=&\Delta_t,\label{eq2lem2thm2}
\end{align}
where \eqref{eq1lem2thm2} holds since $\sigma$ is a submodular set-function (see \cite{Kempe2015a}). Thus, we necessarily have 
\begin{align}
&\frac{\sigma(U)}{h}\nonumber\\
=&\frac{\sum_{t=1}^h\Delta_t}{h}\nonumber\\
\leq &\frac{\sum_{t=1}^k\Delta_t}{k}\label{eq3lem2thm2}\\
=&\frac{\sigma(U_k)}{k}\nonumber\\
\leq &\frac{OPT_N(G,k)}{k},\label{eq4lem2thm2}
\end{align}
where \eqref{eq3lem2thm2} comes from \eqref{eq2lem2thm2}. By \eqref{eq4lem2thm2}, the claim follows. \qed

\section{Missing Proofs of Section \ref{sec_other}}
\subsection{Proof of Theorem \ref{thm_0bou}}
Let $G=(V=[n],E,(p_{uv})_{(u,v)\in E})$ be a $0$-bounded influence graph. We recall the notation from Theorem \ref{thm1}, and we give the following preliminary lemma, whose proof is analogue to that of Lemma \ref{lem2}. 
\begin{lemma}\label{lem_0bou}
When the input influence graph $G$ is a $0$-bounded graph, we have  
\begin{equation}
\sigma(R(\psi_{t-1,L}))\leq f(\psi_{t-1,L})+2\cdot OPT_N(G,t-1),
\end{equation}
for any $t\in [k]$ and live-edge graph $L$.
\end{lemma}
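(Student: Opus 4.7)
The plan is to mirror the proof of Lemma \ref{lem2} with a slightly more careful boundary analysis that exploits the fact that in a $0$-bounded graph every connected component of $G$ is a path or a cycle. As in the in-arborescence case, I would define $\partial U:=\{u\in U:\exists (u,v)\in E,\, v\notin U\}$ and observe that for any partial realisation $\psi$ one has $\sigma(R(\psi))\leq |R(\psi)|+\sigma(\partial R(\psi))=f(\psi)+\sigma(\partial R(\psi))$. Hence it suffices to establish the boundary inequality
$$\sigma(\partial R(\psi_{t-1,L}))\leq 2\cdot OPT_N(G,t-1).$$

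The key combinatorial step is the structural size bound $|\partial R(\psi_{t-1,L})|\leq 2(t-1)$. For each seed $v\in dom(\psi_{t-1,L})$ the set $R(\{v\},L)$ is connected in $G$, since every one of its nodes is reachable from $v$ via live edges, hence is connected to $v$ by an undirected path in $G$. Therefore $R(\psi_{t-1,L})$, being a union of at most $t-1$ connected sets, has at most $t-1$ connected components when viewed as a subgraph of $G$. Because $G$ is $0$-bounded, each such component lies entirely inside a single path or cycle component of $G$, and any connected sub-path of a path or cycle contains at most two nodes with a neighbor outside it. Summing over the at most $t-1$ pieces of $R(\psi_{t-1,L})$ gives the desired bound $2(t-1)$.

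The final step converts this cardinality bound into the spread bound by applying Lemma \ref{lemthm2} with its parameter $k$ replaced by $t-1$. When $|\partial R(\psi_{t-1,L})|\geq t-1$, the lemma directly yields $\sigma(\partial R(\psi_{t-1,L}))\leq\frac{|\partial R(\psi_{t-1,L})|}{t-1}\cdot OPT_N(G,t-1)\leq 2\cdot OPT_N(G,t-1)$; when instead $|\partial R(\psi_{t-1,L})|< t-1$, monotonicity of $\sigma$ together with the definition of $OPT_N(G,t-1)$ gives the weaker bound $\sigma(\partial R(\psi_{t-1,L}))\leq OPT_N(G,t-1)\leq 2\cdot OPT_N(G,t-1)$. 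Either way the required inequality holds, so the lemma follows.

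The only genuinely new step compared with Lemma \ref{lem2} is the structural counting argument above. In the in-arborescence proof each reached set contributed at most one boundary node (the topmost reached ancestor, since every node has a unique outgoing edge), whereas here a connected piece of a path or cycle can have two boundary endpoints; this is exactly where the factor $2$ in the statement comes from. Once this observation is in hand the rest is bookkeeping plus the generic inequality of Lemma \ref{lemthm2}, so I expect no further obstacles.
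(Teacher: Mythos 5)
Your proof is correct and follows essentially the same route as the paper's: bound $\sigma(R(\psi_{t-1,L}))$ by $f(\psi_{t-1,L})+\sigma(\partial R(\psi_{t-1,L}))$, show $|\partial R(\psi_{t-1,L})|\leq 2(t-1)$ from the path/cycle structure, and convert this into a spread bound via Lemma~\ref{lemthm2}; in fact you supply the component-counting argument and the $|\partial R(\psi_{t-1,L})|<t-1$ case that the paper's write-up leaves implicit. The only detail you skip is the degenerate case $t=1$, where $t-1=0$ makes the ratio in Lemma~\ref{lemthm2} meaningless, but there every set involved is empty and the inequality reads $0\leq 0$, so nothing substantive is missing.
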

\begin{proof}
As in Lemma \ref{lem2}, we show that $\sigma(\partial R(\psi_{t-1,L}))\leq 2\cdot OPT_N(G,t-1)$. First of all, we assume that $t\geq 2$, otherwise $\sigma(R(\psi_{t-1,L}))$ and the claim holds. By Lemma \ref{lemthm2}, we have that $\sigma(\partial R(\psi_{t-1,L}))\leq \frac{|\partial R(\psi_{t-1,L})|}{t-1}\cdot OPT_N(G,t-1)$. As $G$ is a $0$-bounded influence graph, we have that $|\partial R(\psi_{t-1,L})|\leq 2(t-1)$. By considering the above  inequalities, we get $\sigma(\partial R(\psi_{t-1,L}))\leq \frac{|\partial R(\psi_{t-1,L})|}{t-1}\cdot OPT_N(G,t-1)\leq \frac{2(t-1)}{t-1}\cdot OPT_N(G,t-1)=2\cdot OPT_N(G,t-1)$, and the claim follows. 
\end{proof}
If $k\leq 3$ the claim trivially holds. Thus, we assume that $k>3$. For any $t\in [k]$, we have
\begin{align}
&k\cdot (OPT_N(G,t)-OPT_N(G,t-1))\nonumber\\
=&k\cdot (\sigma(S_t)-\sigma(S_{t-1}))\nonumber\\
=&k\cdot (\E_L[f(\psi_{t,L})]-\E_L[f(\psi_{t-1,L})])\nonumber\\
\geq &k\cdot (\E_{L,\rho}[f({\psi}_{\rho,t,L})]-\E_L[f(\psi_{t-1,L})])\nonumber\\
= &k\cdot (\E_{L,\rho}[f({\psi}_{\rho,t,L})]-\E_{L,\rho}[f(\psi_{t-1,L})])\nonumber\\
=&k\cdot \E_{L,\rho}[f({\psi}_{\rho,t,L})-f(\psi_{t-1,L})]\nonumber\\
=& \E_{\psi_{t-1,L}}\left[k\cdot \E_{L,\rho}[f({\psi}_{\rho,t,L})-f(\psi_{t-1,L})|\psi_{t-1,L}]\right]\nonumber\\
\geq &\E_{\psi_{t-1,L}}[OPT_A(G,k)-\sigma(R(\psi_{t-1,L}))]\label{eqthm1_0bou}\\
\geq &\E_{\psi_{t-1,L}}[OPT_A(G,k)-f(\psi_{t-1,L})-2\cdot OPT_N(G,t-1)]\label{eqthm3_0bou}\\
=& \E_{\psi_{t-1,L}}[OPT_A(G,k)]-\E_{\psi_{t-1,L}}[f(\psi_{t-1,L})]-2\cdot \E_{\psi_{t-1,L}}[OPT_N(G,t-1)]\nonumber\\
= &OPT_A(G,k)-\sigma(S_{t-1})-2\cdot OPT_N(G,t-1)\nonumber\\
= &OPT_A(G,k)-3\cdot OPT_N(G,t-1)\label{eqthm4_0bou},
\end{align}
where \eqref{eqthm1_0bou} comes from Lemma \ref{lem1} and \eqref{eqthm3_0bou} comes from Lemma \ref{lem_0bou}. Thus, by \eqref{eqthm4_0bou}, we get $k\cdot (OPT_N(G,t)-OPT_N(G,t-1))\geq OPT_A(G,k)-3\cdot OPT_N(G,t-1)$, that after some manipulations leads to the following recursive relation:
\begin{equation}\label{fundeqthm_0bou}
OPT_N(G,t)\geq \frac{1}{k}\cdot OPT_A(G,k)+\left(1-\frac{3}{k}\right)\cdot OPT_N(G,t-1),\quad \forall t\in [k].
\end{equation}
By applying iteratively \eqref{fundeqthm_0bou}, we get
\begin{equation*}
OPT_N(G,k)\geq \frac{1}{k}\cdot \sum_{t=0}^{k-1}\left(1-\frac{3}{k}\right)^{t}\cdot OPT_A(G,k)=\frac{1-\left(1-3/k\right)^k}{3}\cdot OPT_A(G,k),
\end{equation*}
that leads to 
\begin{equation}
\frac{OPT_A(G,k)}{OPT_N(G,k)}\leq \frac{3}{1-(1-3/k)^k}\leq \frac{3}{1-e^{-3}},
\end{equation}
and this shows the claim. \qed

\end{document}